\let\MYcaption\@makecaption
\let\@makecaption\MYcaption
\newtheorem{theorem}{Theorem}
\newtheorem{lemma}{Lemma}
\theoremstyle{definition}
\newtheorem{assumption}{Assumption}
\newtheorem{definition}{Definition}
\newtheorem{remark}{Remark}
\newcommand{\eqb}[1]{\begin{equation}\label{#1}}
\newcommand{\eqe}{\end{equation}}
\newcommand{\mbb}[1]{\mathbb{#1}}
\newcommand{\mc}[1]{\mathcal{#1}}
\newcommand{\ib}{\begin{itemize}}
\newcommand{\ie}{\end{itemize}}
\def \symm {+}
\newsavebox{\ieeealgbox}
\title{\LARGE \bf
Distributed Adaptive and Resilient Control of Multi-Robot Systems \\with Limited Field of View Interactions using Q-Learning
}
\author{Pratik Mukherjee$^{1}$, Matteo Santilli$^{2}$, Andrea Gasparri$^{2}$, and Ryan K. Williams$^{1}$%
\thanks{$^{1}$P. Mukherjee and R.K. Williams are with Electrical and Computer Engineering Department, Virginia Polytechnic Institute and State University, Blacksburg, VA USA, {\tt\small \{mukhe027, rywilli1\}@vt.edu} }%
\thanks{$^{2}$M. Santilli and A. Gasparri are with the Engineering Department, Roma Tre University, Roma, 00146, Italy, {\tt\small matteo.santilli@uniroma3.it, gasparri@dia.uniroma3.it}}%
}
\def\trifov#1{\mathcal{T}_{#1}}
\def\dims{3}
\def\dimp{2}
\begin{document}

\maketitle
\thispagestyle{empty}
\pagestyle{empty}

\begin{abstract}
 In this paper, we consider the problem of dynamically tuning gains for multi-robot systems (MRS) under potential based control design framework where the MRS team coordinates to maintain a connected topology while equipped with limited field of view sensors.  Applying the potential-based control framework and assuming robot interaction is encoded by a triangular geometry, we derive a distributed control law in order to achieve the topology control objective. A typical shortcoming of potential-based
control in distributed networks is that the overall system behavior is highly sensitive to gain-tuning.
To overcome this limitation, we propose a distributed and
adaptive gain controller that preserves a designed pairwise interaction strength, independent of the network size. Over that, we implement a control scheme that enables the MRS to be resilient against exogenous attacks on on-board sensors or actuator of the robots in MRS. In this regard, we model additive sensor and actuator faults  which are induced externally to render the MRS unstable. However, applying  $H_{\infty}$ control protocols by employing a static output-feedback design technique guarantees bounded $L_2$ gains of the error induced by the sensor and actuator fault signals. Finally, we apply policy iteration based Q-Learning to solve for adaptive gains for the discrete-time MRS. Simulation results are provided to support the theoretical findings.

\end{abstract}

\section{Introduction}
Recent efforts in distributed multi-robot coordination have
exploded, both in number and in breadth of capability. The
significant interest in distributed systems is not surprising as
the applications are numerous, including for example sensor
networks, collaborative robotics, transportation, etc. One trend
of multi-robot control has been potential-based design. Indeed, the potential-based control
methodology has been widely investigated by the control community, e.g.,
\cite{Gasparri:CDC:2017:1,mukherjee2019experimental,santilli2019distributed,ji2007distributed,zavlanos2007potential,dimarogonas2008connectedness} . The advantages of potential-based design
is that controllers are easily distributed, physically-motivated,
and come with provable convergence guarantees.
Although the potential-based control framework is very powerful, it exhibits several shortcomings ranging from local minima to high sensitivity to gains. While local minima are inherent to the design of potentials encoding network objectives,
high sensitivity to gains can be mitigated with a proper choice
of pairwise interaction strength for a given network. 
To the best of our knowledge, the closest related work is represented by recent results on performance-driven multi-robot systems (MRS). Briefly, the underlying idea of these approaches is
to enforce the transient response of a system to match some
designer-imposed metric. Works that approach this problem
have generally selected performance metrics in two distinct
manners: either by restricting upper and lower bounds on overall convergence rate to equilibrium, or by constraining instantaneous control action, such as maximum relative velocity.
Works such as \cite{bechlioulis2014low}, \cite{bechlioulis2014robust} (which are based on the prescribed
performance framework) and \cite{karayiannidis2012multi} take the former approach, controlling the transient response of the error signal
in formation control and consensus, respectively. For example,
\cite{bechlioulis2014robust} achieves robust model-free prescribed control for leader-follower formations, which is decoupled from graph topology,
control gain selection, and model uncertainties. Alternatively,
works such as \cite{korda2014stochastic}, \cite{guo2013controlling} take a model-driven approach in controlling constraint violation, \cite{korda2014stochastic} applying model-predictive
control and \cite{guo2013controlling} through navigation functions (a form of potential control), both in a centralized context. 

One other area in directed coordination problems that has received recent attention \cite{Asadi:2016,Zhang:2013} is control with limited \emph{fields of view} (FOVs), for example aerial vehicles equipped with cameras capable of sensing other robots in a visually restricted area. 
In our recent work \cite{santilli2019distributed} we  exhibit a distributed potential-based coordination framework for an MRS equipped with limited field of view sensors to maintain a asymmetric, connected topology. In this context, threat from external attacks that can induce sensor and actuator faults to destabilize an already sensitive to gains potential based control system becomes real. Limited FOV control applications of coordinated MRS of the type developed in \cite{santilli2019distributed} are not just prone to the sensitivity of the potential based controllers but also under constant threat of external attacks especially in case of asymmetric interaction (i.e. directed graphs) control where stability is not inherent as proposed in our prior works \cite{Gasparri:CDC:2017:1,mukherjee2020optimal}. Therefore, it is important that we develop a control scheme that is resilient to such external attacks. Works in literature such as \cite{chen2020adaptive,chen2019resilient} address the problem of solving for resilient controllers of multi-agent systems that are exposed to external attacks, faults or disturbances. Particularly, we will now briefly review the work in \cite{chen2019resilient} where the authors address the problem of resilient and adaptive multi-agent systems under actuator and sensor faults. The nature of the faults are assumed to be additive and particularly the sensor fault is assumed to be unbounded. The authors use the \emph{static output feedback} technique to show that the error induced from the faults has bounded $L_2$ gains in terms of the $L_2$ norms of fault signals. In this paper, we modify the static output feedback technique using $H_{\infty}$ control protocols to fit our potential based control framework as in \cite{santilli2019distributed}
to tackle additive sensor and actuator fault that can be catastrophic for a sensitive MRS with asymmetric interaction topology where topology stability is not inherent unlike in the case of symmetric (i.e. undirected graphs) interactions as also detailed in \cite{Gasparri:CDC:2017:1}. 

In this direction, our contribution is fourfold.
i) We demonstrate that our distributed controller yields
an approximation of the desired pairwise interaction strength,
while preserving the equilibria of an underlying network objective for robots coordinating in asymmetric interaction scenarios with limited FOVs. 
 ii) We couple the potential based controller with $H_{\infty}$ control protocols, similar to the one designed in \cite{chen2019resilient} to tackle external attacks on the MRS that can induce faults in sensors and actuators of the robots, such that we can show that the induced error  from the faults has bounded $L_2$ gains in terms of the $L_2$ norms of fault signals and the underlying objective can remain intact. iii) We demonstrate the implementation of \emph{policy iteration} based Q-Learning method for computing the pairwise gains $k_{ij}$ online and prove convergence to the optimal gains. iv) We corroborate the theoretical results with extensive simulations.


\section{Preliminaries}\label{sec:preliminaries}

\subsection{Robot Network Modeling} \label{sub_sec:mod}

Let us consider an MRS composed of $n$~robots and assume that each robot~$i$ has the a first-order dynamics $\dot{s}_i(t) = u_i(t)$
%
%
with $s_i(t) = [p_i(t)^T, \, \theta_i(t)]^T \in \mathbb{R}^{\dimp} \times (-\pi,\pi]$ the state of the robot~$i$ composed of the position $p_i(t) = [x_i(t), y_i(t)]^T \in \mathbb{R}^{\dimp}$ and the orientation $\theta_i \in (-\pi,\pi]$, while $u_i(t) \in \mathbb{R}^{\dims}$ denotes the control input. Stacking robot states and inputs yields the overall system $\dot{\mathbf{s}}(t) = \mathbf{u}(t)$
%
%
with $\mathbf{s}(t) = [s_1(t)^T,\, \ldots, \, s_n(t)^T]^T \in \mathbb{R}^{\dimp n} \times (-\pi, \pi]^n$ and $\mathbf{u}(t) = [u_1(t)^T,\, \ldots, \, u_n(t)^T]^T \in \mathbb{R}^{\dimp n} \times (-\pi, \pi]^n$ the stacked vector of states and control inputs, respectively. In the sequel, time-dependence will be omitted for the sake of clarity. 


Let us assume that each robot~$i$ possesses a limited field of view that is encoded by a triangle geometry $\trifov{i}$ rigidly fixed to the robot as also modelled in \cite{santilli2019distributed}. This kind of sensing yields asymmetric robot interactions that we will describe through a directed graph \mbox{$\mathcal{G} = \{ \mathcal{V}, \, \mathcal{E} \}$} with node set $ \mathcal{V} = \{ q_1,\ldots,q_n \}$ and edge set $\mathcal{E} \subseteq \mathcal{V} \times \mathcal{V}$. In particular, we will say that an edge $e_{ij} \in \mathcal{E}$ connects robot~$i$ and robot~$j$ if $p_j \in \trifov{i}$. In addition, when referencing single edges we will use the convention $e_k$ meaning that we are referencing the $k$-th directed edge out of $|\mathcal{E}|$ total edges\footnote{In order to reference the $k$-th edge, the edge set $\mathcal{E}$ needs to be sorted. A simple sort can be obtained enumerating the edges $(1,j)$ of the first robot as \mbox{$e_1,e_2,\ldots,$} then the edges $(2,h)$ of the second robot and so on.}. Moreover, we will denote by \mbox{$\mathcal{N}_{i}^{+} = \{j \in \mathcal{V} :(i,j) \in \mathcal{E}\}$} the set of \emph{out-neighbors} of robot~$i$ and \mbox{$\mathcal{N}_{i}^{-} = \{j \in \mathcal{V} :(j,i) \in \mathcal{E}\}$} the set of \emph{in-neighbors}. Note that since the graph is directed $(i,j) \in \mathcal{E}$ does not imply $(j,i) \in \mathcal{E}$. 

A useful representation for a directed graph $\mathcal{G}$ is the \textit{incidence matrix} $ \mathcal{B}(\mathcal{G})\in \mathbb{R}^{n\times |\mathcal{E}|}$, that is a matrix with rows indexed by robots and columns indexed by edges, such that $\mathcal{B}_{ij} = 1$ if the edge $e_j$ leaves vertex $v_i$, $-1$ if it enters vertex $v_i$, and $0$ otherwise. The \emph{outgoing incidence matrix} $\mathcal{B}_+$ contains only the outgoing parts of the incidence matrix $\mathcal{B}$, with incoming parts set to zero. We will also make use of the \emph{directed edge Laplacian} $\mathcal{L}_{\mathcal{E}}^d \in \mbb{R}^{|\mc{E}| \times |\mc{E}|}$ given by $\mc{L}_{\mc{E}}^d = \mc{B}^T\mc{B}_+$.  For properties of the edge Laplacian see for example \cite{Zelazo:2007,Zeng:2016}.

\section{Directed Coordination Framework}\label{sec:control}
In this section , we briefly review important results from our previous work \cite{santilli2019distributed} which are useful in understanding the theoretical background of this paper. We refer the reader to \cite{santilli2019distributed} for the detailed theory of potential based limited FOV control of MRS.

\subsection{Potential Fields for Topology Maintenance}

As a case study let us consider the maintenance of the interactions (topology) among the robots in the network, i.e., we want to preserve the initial graph~$\mathcal{G}$.  In other words, we want to derive a distributed potential-based control law such that each robot~$i$ maintains its neighbor~$j$ inside the triangle~$\trifov{i}$ that defines the perimeter of the FOV of each robot as detailed in \cite{santilli2019distributed}. In this direction, we can define the potential field term $\Phi_{ij}(s_i,s_j)$ which encodes the sum of energies of the perpendicular distances from a point, the agent~$j$, to the line, the sides of the triangle~$\trifov{i}$.
Next, we model a quality of interaction map  by using a two-dimensional Gaussian function $\Psi_{ij}(s_i,s_j)$ with mean  and variance.
For the FOV maintenance controller, we can move along the anti-gradient of our potential fields $\Phi_{ij}(s_i,s_j)$ and $\Psi_{ij}(s_i,s_j)$ to orient the FOV of the robots in order to minimize the distance between the neighbors $j \in \mathcal{N}_i$ and the desired position. We now provide two technical results combined of the potential fields $\Phi_{ij}$ and $\Psi_{ij}$ that will prove  necessary to derive the main result in Theorem~\ref{th:1}. 

\begin{lemma}(Lemma 1 and 2 in \cite{santilli2019distributed})\label{lemma:potential-property-psi}
The potential fields $\Phi_{ij}$ and $\Psi_{ij}$ satisfy the local potential properties $\nabla_{p_i}\Phi_{ij} = -\nabla_{p_j}\Phi_{ij}$ and $\nabla_{p_i}\Psi_{ij} = -\nabla_{p_j}\Psi_{ij}$, respectively. 

\end{lemma}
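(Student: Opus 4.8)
The plan is to exploit the \emph{translation invariance} of both potentials: because the FOV triangle $\trifov{i}$ is rigidly attached to robot~$i$, translating $p_i$ and $p_j$ by a common vector shifts the triangle and the point $p_j$ identically, leaving every side-to-point distance and the Gaussian argument unchanged. Hence each potential depends on the two positions only through the relative displacement $\delta_{ij} := p_j - p_i$ (together with the orientation $\theta_i$, which is held fixed when differentiating with respect to positions). The entire lemma then reduces to the elementary chain-rule identity
\begin{equation}
\nabla_{p_i}\,\delta_{ij} = -I = -\nabla_{p_j}\,\delta_{ij},
\end{equation}
so I would first record this and reduce both claims to verifying the requisite $\delta_{ij}$-dependence.

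For $\Phi_{ij}$ I would write it as a sum over the three sides of $\trifov{i}$. Let side $m$ carry the unit normal $n_m(\theta_i)$ and pass through the body point $p_i + R(\theta_i)c_m$, with $R(\theta_i)$ the rotation by $\theta_i$ and $c_m$ a fixed body-frame offset. The signed perpendicular distance from $p_j$ is then the affine map $d_m = n_m(\theta_i)^{\!T}\big(\delta_{ij} - R(\theta_i)c_m\big)$, and $\Phi_{ij} = \sum_m g_m(d_m)$ for the per-side energy $g_m$. Since $\nabla_{p_j} d_m = n_m(\theta_i) = -\nabla_{p_i} d_m$, the chain rule gives $\nabla_{p_i}\Phi_{ij} = \sum_m g_m'(d_m)\nabla_{p_i}d_m = -\sum_m g_m'(d_m)\nabla_{p_j}d_m = -\nabla_{p_j}\Phi_{ij}$, which is the first claim.

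For $\Psi_{ij}$ the argument is identical in spirit: the Gaussian quality map is centered at the designer-chosen relative position in the body frame, so its exponent is a quadratic form in $q := R(\theta_i)^{\!T}\delta_{ij}$, i.e. $\Psi_{ij} = A\exp\!\big(-\tfrac12 (q-\mu)^{\!T}\Sigma^{-1}(q-\mu)\big)$ with constant $A$, $\mu$, $\Sigma$. Because $\nabla_{p_j} q = R(\theta_i)^{\!T} = -\nabla_{p_i} q$, the same chain-rule computation yields $\nabla_{p_i}\Psi_{ij} = -\nabla_{p_j}\Psi_{ij}$.

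I expect the only real obstacle to be the bookkeeping that establishes translation invariance cleanly in the presence of the orientation: one must check that every appearance of $p_i$ in the side normals, vertices, and Gaussian center is paired with the corresponding $p_j$ so that they combine into $\delta_{ij}$, and that $\theta_i$ — which does depend on $s_i$ — never contributes to the \emph{position} gradients $\nabla_{p_i}$, $\nabla_{p_j}$. Once that is verified the antisymmetry is automatic; this is precisely the property that lets the stacked controller inherit the edge-Laplacian/consensus structure exploited later in Theorem~\ref{th:1}.
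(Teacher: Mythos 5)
Your argument is correct, but note that there is nothing in this paper to compare it against: the lemma is imported verbatim by citation from Lemmas 1 and 2 of \cite{santilli2019distributed}, and no proof is reproduced here, so your derivation serves as a self-contained replacement rather than an alternative to an in-paper argument. Your route --- reducing everything to translation invariance, i.e.\ to the observation that with $\theta_i$ held fixed both potentials depend on $(p_i,p_j)$ only through $\delta_{ij}=p_j-p_i$, and then applying the chain rule via $\nabla_{p_i}\delta_{ij}=-\nabla_{p_j}\delta_{ij}$ --- is clean and strictly more general than a coordinate-by-coordinate differentiation of the explicit potentials: it is indifferent to the precise per-side energies $g_m$, to the Gaussian amplitude, and even to whether the covariance is constant. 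That generality matters here, because your concrete parametrization of $\Psi_{ij}$ (constant $A$, $\mu$, $\Sigma$) is not quite faithful to the referenced potential: as this paper itself notes in the resilient-control section, the variances $\sigma_x,\sigma_y$ of $\Psi_{ij}$ depend on the measurements $p$. This does not break your proof --- as long as those variances depend on positions only through relative, body-frame quantities (which they do, since the quality-of-interaction map is rigidly attached to robot~$i$'s field of view $\trifov{i}$), the potential remains a function of $\delta_{ij}$ and $\theta_i$ alone, and your opening reduction applies --- but you should state the hypothesis at that level (``$\Psi_{ij}$ depends on positions only through $R(\theta_i)^T(p_j-p_i)$'') rather than hard-coding a constant-parameter Gaussian. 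With that adjustment, your proof is complete, and the final remark you make is apt: this antisymmetry is exactly what gives the stacked dynamics the incidence/edge-Laplacian structure exploited in Theorem~\ref{th:1}.
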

Note that, both gradient control terms  $\nabla_{p_i}\Phi_{ij}$ and $\nabla_{p_i}\Psi_{ij}$ can be expressed in a generic form as $\nabla_{p_i}\Phi_{ij}(p_i,p_j) = f(p_i,p_j)$ and $\nabla_{p_i}\Psi_{ij}(p_i,p_j) = g(p_i,p_j)$ respectively,
in which $g \in C^0$ is a continuous function.
We are now ready to introduce the control law that each robot~$i$ needs to run in order to keep its neighbors in the limited sensing zone $\trifov{i}$ near the desired point as the following
\begin{equation}\label{eq:u_i}
\begin{aligned}
\dot{s}_i &= - \underbrace{\sum \limits_{j \in \mathcal{N}_i^+} \nabla_{s_i} \Big ( \Phi_{ij}(s_i,s_j)+\Psi_{ij}(s_i,s_j) \Big )}_{u_i} \\
\end{aligned}
\end{equation}
%
%
%

\subsection{Asymmetric Interaction Graph Stability Analysis}

In this section we provide the main result of our work from \cite{santilli2019distributed} \cite{santilli:2020}, that is the stability analysis of the proposed topology control framework. From \cite{santilli2019distributed} we know that the overall Lyapunov function  for directed graphs $\Bar{V}: \mathbb{R}^{\dimp n} \times (-\pi,\pi]^n \rightarrow \mathbb{R}_+$ defined as
\begin{equation}\label{eq:V}
\begin{aligned}
    \Bar{V} (\mathbf{s}(t)) 
    &= \sum \limits_{i=1}^n \sum \limits_{j \in \mathcal{N}_i^+} \underbrace{\Big ( \Phi_{ij} \left (s_i,s_j \right ) + \Psi_{ij} \left (s_i, s_j \right) \Big )}_{\Bar{V}_{ij} \left (s_i,s_j \right )}
\end{aligned}
\end{equation}

To this end let us introduce the matrix $\overline{\mathcal{L}}$ defined as
\begin{equation}\label{eq:overlineL}
    \overline{\mathcal{L}} = \begin{bmatrix}
    \left ( \mathcal{B}^T \mathcal{B}_+ \right )\otimes I_{\dimp} & O_{ \dimp |\mathcal{E}| \times |\mathcal{E}|} \\
    O_{|\mathcal{E}| \times \dimp |\mathcal{E}|} & \left( \mathcal{B}_+^T \mathcal{B}_+ \right )
\end{bmatrix}
\end{equation}
where 
$I_{\dimp}$ is the $\dimp \times \dimp$ identity matrix and $O_{r \times r}$ is a $r \times r$ zeros matrix. We are now ready to state our main result.

\begin{theorem}(Theorem 1 in \cite{santilli2019distributed} and \cite{santilli:2020})\label{th:1}
Consider the multi-robot system $\dot{\mathbf{s}}(t) = \mathbf{u}(t)$ running control laws \eqref{eq:u_i}. Then, if the symmetric part of matrix $\overline{\mathcal{L}}$ in \eqref{eq:overlineL}, $ \overline{\mathcal{L}}^{\symm} = \frac{1}{2} \left( \overline{\mathcal{L}} + \overline{\mathcal{L}}^T \right )$, is positive semi-definite, the system is stable in the sense that if the energy $\Bar{V} (\mathbf{s}(t))$ is finite at time $t = t_0$ then it remains finite for all $t > t_0$.
\end{theorem}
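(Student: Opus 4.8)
The plan is to use the energy $\Bar{V}$ of \eqref{eq:V} directly as a Lyapunov function and to show that, under the stated semidefiniteness hypothesis, its time derivative along the closed-loop trajectories generated by \eqref{eq:u_i} is nonpositive. Once $\dot{\Bar{V}} \le 0$ is established the claim follows immediately: $\Bar{V}$ is non-increasing, so $\Bar{V}(\mathbf{s}(t)) \le \Bar{V}(\mathbf{s}(t_0))$ for every $t \ge t_0$, and finiteness at $t_0$ propagates forward in time. Thus the whole theorem reduces to computing $\dot{\Bar{V}}$ and recognizing it as a quadratic form governed by $\overline{\mathcal{L}}$.

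To carry this out I would first differentiate $\Bar{V}$ by the chain rule, edge by edge. For a directed edge $e_k = (i,j)$ the summand $\Bar{V}_{ij}$ contributes $\nabla_{p_i}\Bar{V}_{ij}^T \dot{p}_i + \nabla_{\theta_i}\Bar{V}_{ij}\,\dot{\theta}_i + \nabla_{p_j}\Bar{V}_{ij}^T \dot{p}_j + \nabla_{\theta_j}\Bar{V}_{ij}\,\dot{\theta}_j$. Here \Cref{lemma:potential-property-psi} is the workhorse: the relations $\nabla_{p_i}\Phi_{ij}=-\nabla_{p_j}\Phi_{ij}$ and $\nabla_{p_i}\Psi_{ij}=-\nabla_{p_j}\Psi_{ij}$ collapse the two position terms into $\nabla_{p_i}\Bar{V}_{ij}^T(\dot{p}_i - \dot{p}_j)$. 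Moreover, since the triangle $\trifov{i}$ and hence $\Bar{V}_{ij}$ depend on the neighbour only through its position $p_j$ and not through its orientation $\theta_j$, the term $\nabla_{\theta_j}\Bar{V}_{ij}\,\dot{\theta}_j$ vanishes and only the source orientation survives. Writing $f_k := \nabla_{p_i}\Bar{V}_{ij} \in \mathbb{R}^{\dimp}$ and $h_k := \nabla_{\theta_i}\Bar{V}_{ij} \in \mathbb{R}$ and stacking over all $|\mathcal{E}|$ edges into $\mathbf{f}$ and $\mathbf{h}$, the control law \eqref{eq:u_i} reads $\dot{\mathbf{p}} = -(\mathcal{B}_+\otimes I_{\dimp})\mathbf{f}$ and $\dot{\boldsymbol{\theta}} = -\mathcal{B}_+\mathbf{h}$, precisely because $u_i$ sums gradients only over the out-neighbours $\mathcal{N}_i^+$.

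Substituting these into the differentiated energy and using that the stacked position differences across edges are $(\mathcal{B}^T\otimes I_{\dimp})\dot{\mathbf{p}}$ while the stacked source orientations are $\mathcal{B}_+^T\dot{\boldsymbol{\theta}}$, the derivative becomes $\dot{\Bar{V}} = -\mathbf{f}^T\big((\mathcal{B}^T\mathcal{B}_+)\otimes I_{\dimp}\big)\mathbf{f} - \mathbf{h}^T(\mathcal{B}_+^T\mathcal{B}_+)\mathbf{h}$. Collecting $\mathbf{g} := [\mathbf{f}^T,\,\mathbf{h}^T]^T$ this is exactly $\dot{\Bar{V}} = -\mathbf{g}^T\overline{\mathcal{L}}\,\mathbf{g}$ with $\overline{\mathcal{L}}$ as in \eqref{eq:overlineL}. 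Since a quadratic form sees only the symmetric part, $\mathbf{g}^T\overline{\mathcal{L}}\,\mathbf{g} = \mathbf{g}^T\overline{\mathcal{L}}^{\symm}\mathbf{g}$, and $\overline{\mathcal{L}}^{\symm}\succeq 0$ then gives $\dot{\Bar{V}}\le 0$, closing the argument.

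I expect the main obstacle to be the bookkeeping forced by the directedness, namely reconciling the two distinct incidence operators. The energy derivative naturally produces the full incidence matrix $\mathcal{B}$ on the position side through the anti-symmetry of \Cref{lemma:potential-property-psi}, whereas the control law injects only the outgoing incidence matrix $\mathcal{B}_+$; it is exactly this mismatch that renders $\overline{\mathcal{L}}$ non-symmetric and forces the hypothesis onto $\overline{\mathcal{L}}^{\symm}$ rather than $\overline{\mathcal{L}}$ itself. In the symmetric (undirected) case the analogous matrix is symmetric and positive semi-definite by construction, so the condition is automatically met, which is consistent with the inherent stability of undirected interactions noted earlier. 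Getting the orientation block right, recognizing that it carries $\mathcal{B}_+$ on both sides because $\Bar{V}_{ij}$ is independent of $\theta_j$, is the other place where care is required.
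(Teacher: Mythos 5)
Your proposal is correct and takes essentially the same approach as the paper: your stacked vector $\mathbf{g}=[\mathbf{f}^T,\,\mathbf{h}^T]^T$ is exactly the paper's $\xi$ in \eqref{eq:xi}, and your computation reproduces the quadratic form $\dot{\Bar{V}}(\mathbf{s})=-\xi^T\overline{\mathcal{L}}^{\symm}\xi\leq 0$ of \eqref{eq:dotV:2}, from which monotonicity of $\Bar{V}$ yields the finiteness claim. The only distinction is that you spell out the incidence-matrix bookkeeping (including why $\nabla_{\theta_j}\Bar{V}_{ij}=0$ forces the orientation block to be $\mathcal{B}_+^T\mathcal{B}_+$) that the paper imports from \cite{santilli2019distributed}.
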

In \cite{santilli:2020}, we show that
given that the Lyapunov function $\Bar{V}(s)$ is continuously differentiable we can define the level set~$\Delta_c = \{s : \Bar{V}(s) \leq c \}$ for any $c>0$ as a compact and invariant set with respect to the relative position of the robots. Arguments about the compactness of the level sets~$\Delta_c$ with respect with to the relative distances can be found in \cite{Dimarogonas:2008,Tanner:TAC:2007}. LaSalle's principle now guarantees that the system will converge to the largest invariant subset of $\{ s :  \dot{\Bar{V}}(s) = 0 \}$ by analyzing the nullspace of the matrix~$\overline{\mathcal{L}}^{\symm}$, $\left \{\xi \in \mathbb{R}^{3|\mathcal{E}|} : \overline{\mathcal{L}}^{\symm} \xi = 0 \right \}$.
From the complete proof of Theorem 1, we know that the Lyapunov derivative is defined as
\begin{equation}\label{eq:dotV:2}
\begin{aligned}
    \dot{\Bar{V}}(\mathbf{s}) 
     =& - \frac{1}{2} \xi^T \! \left [ \left( \overline{\mathcal{L}}\!  +\!  \overline{\mathcal{L}}^T \right ) \! + \! \left( \overline{\mathcal{L}}\!  -\!  \overline{\mathcal{L}}^T \right )  \right ] \xi =- \xi^T \overline{\mathcal{L}}^{\symm} \xi \leq 0 \\
\end{aligned}
\end{equation}
where $\overline{\mathcal{L}}$ is already defined in \eqref{eq:overlineL} and  
using the stacked vector of potential field gradients $\xi \in \mathbb{R}^{\dims|\mathcal{E}|} = \left [\xi_{xy}^T, \, \xi_{\theta}^T \right]^T$ where $\xi_{xy} \in \mathbb{R}^{2|\mathcal{E}|}$ and $\xi_{\theta} \in \mathbb{R}^{|\mathcal{E}|}$ are defined as
%
%
\begin{equation}\label{eq:xi}
\begin{aligned}
&\xi_{xy}=\!  \resizebox{0.92\hsize}{!}{%
    $\left [ \nabla_{x_{e_{1}(1)}} \Bar{V}_{e_1}^T, \nabla_{y_{e_{1}(1)}} \Bar{V}_{e_1}^T, \ldots, \nabla_{x_{e_{|\mathcal{E}|}(1)}} \Bar{V}_{e_{|\mathcal{E}|}}^T, \nabla_{y_{e_{|\mathcal{E}|}(1)}} \Bar{V}_{e_{|\mathcal{E}|}}^T \right ]^T$
    } \\
&\xi_{\theta} = \resizebox{.5\hsize}{!}{
$\left [ \nabla_{\theta_{e_{1}(1)}} \Bar{V}_{e_1}^T, \ldots, \nabla_{\theta_{e_{|\mathcal{E}|}(1)}} \Bar{V}_{e_{|\mathcal{E}|}}^T \right ]^T$}
\end{aligned}
\end{equation}
where $e_k(1)$ denotes the starting vertex $q_i$ of the $k$-th edge $(i,j)$, and thus $\nabla_{w_{e_{k}(1)}} \Bar{V}_{e_k} \in \mathbb{R}$ denotes the gradient with respect to the state variable $w_i \in \{x_i, y_i, \theta_i \}$ of potential function $\Bar{V}_{ij}$.
The Lyapunov time derivative  is in a typical quadratic form and its characteristics depend on the positive-semidefiniteness of symmetric block diagonal matrix $\overline{\mathcal{L}}^{\symm}$. 
\section{Distributed Adaptive FOV Control formulation}
So far we have derived a stable control law, as given in equation \eqref{eq:u_i}, to conduct a stable topology control of  MRS with limited FOV. 
In this section, we modify the control law to be adaptive for multi-robots directed topology control with limited FOVs. 
\subsection{Lyapunov Based Adaptive Control Design}\label{sub_sec:Lyp_adap}
For the sake of analysis, let us introduce a bijective indexing function $g(\cdot): \mathcal{V}\times \mathcal{V}\rightarrow \{1,...,|\mathcal{E}|\}$ to sort directed edges, that is $g(\mathcal{E})=[e_1,...,e_{|\mathcal{E}|}]$. Note that, since the graph is directed we have $e_h=g(e_{ij})\neq g(e_{ji}) $. Clearly, it is possible to define the inverse function such that if $e_h = g(e_{ij})$ then $e_{ij}=g^{-1}(e_h)$. Let us now consider an additional
stacked vector of gains $\bold{k}$ defined as
\begin{equation}\label{eq:k_vec}
\begin{aligned}
    \bold{k}=[k_{e_1},...,k_{e_{\mathcal{E}}}]
\end{aligned}    
\end{equation}
Then we have the following generalization of the distributed control law for a given robot $i$
\begin{equation}\label{eq:gen_con_law}
\begin{aligned}
    \dot{s}_i = \underbrace{- \sum_{j \in \mathcal{N}_i^+} k_{ij} \nabla_{s_i} \Bar{V}_{ij}(s_i,s_j)}_{\Bar{u}_i (s,\bold{k})}\\
    \dot{k}_{ij}= u_{ij}(s,\bold{k}), \quad \forall e_{ij}, j \in \mathcal{N}_i^+
\end{aligned}    
\end{equation}
As it will become clear later, two robots $i$ and $j$ which are the two endpoints of the same edge $e_{ij}$ will concur in the computation of the term $u_{ij}(s,\bold{k})$. 

Our goal is to design for each pair of robots $i$ and $j$ the control term $u_{ij}(s,\bold{k})$ to adaptively tune pairwise gains $k_{ij}$ to approximate the nominal pairwise interaction model given by $\Bar{V}_{ij}(s_i,s_j)$, independent of the network size. To design such an adaptive potential-based control law the following Lyapunov function can be considered 
\begin{equation}\label{eq:adp_V}
\begin{aligned}
    V(s,\bold{k})=\underbrace{\sum^n_{i=1}\sum_{j\in \mathcal{N}_i^+}k_{ij}\Bar{V}_{ij}(s_i,s_j)}_{\hat{V}(s,\bold{k})} + \underbrace{\sum^n_{i=1}\sum_{j\in \mathcal{N}_i^+} F_{ij}(p,\bold{k})}_{F(p,\bold{k})}
\end{aligned}    
\end{equation}
with $F_{ij}(p,\bold{k})$ the pairwise cost encoding the deviation of each pairwise interaction from its nominal model defined as
\begin{equation}\label{eq:adp_F}
\begin{aligned}
   F_{ij}(p,\bold{k})= \frac{1}{2} \| P_{ij}(\Bar{u}_i(p,\bold{k}))- m_{ij}(p,\bold{k})\|^2
\end{aligned}    
\end{equation}
with our desired pairwise model $m_{ij}$ defined as
\begin{equation}\label{eq:mij}
\begin{aligned}
   m_{ij}(p,\bold{k})= - \nabla_{p_i} \Bar{V}_{ij}(p_{ij})
\end{aligned}    
\end{equation}
that is, the anti-gradient of the potential function $\Bar{V}_{ij}$ with respect to robot $i$ and $P_{ij}(\Bar{u}_i(p,\bold{k}))$ the projection of the control input of robot $i$ over the line of sight of the robots $i$ and $j$ defined as 
\begin{equation}\label{eq:Pij}
\begin{aligned}
   P_{ij}(\Bar{u}_i(p,\bold{k}))=\left ( \frac{p_{ij}^T \Bar{u}_i(p,\bold{k})}{\|p_{ij}\|^2} \right ) p_{ij}
\end{aligned}    
\end{equation}
with $\Bar{u}_i(p,\bold{k})$ defined as in \eqref{eq:gen_con_law}. Moving forward, we remove the dependence from $(p,\bold{k})$ for the sake of readability. Note, we only solve for $k_{ij}$ explicitly using the state variables $p=[x, \quad y]^T$ because desired $\theta$ is solved for implicitly when optimal $k^*_{ij}$ is obtained as $x$ and $y$ coordinates are implicitly a function of polar coordinates.
With this idea, we can complete the control law given in \eqref{eq:gen_con_law} by introducing the following distributed control law for each edge-wise gain $k_{ij}$ with $(i,j)\in \mathcal{E}$
\begin{equation}\label{eq:uij}
\begin{aligned}
 u_{ij}(s,\bold{k})= - \nabla_{k_{ij}} F(p,\bold{k}) + w_{ij}
\end{aligned}    
\end{equation}
where the term $\nabla_{k_{ij}}F(p,\bold{k})$ is computed per edge in the directed graph as
\begin{equation}\label{eq:del_F}
\begin{aligned}
\nabla_{k_{ij}}F(p,\bold{k}) = \sum_{(i,h) \in\mathcal{E}} \nabla_{k_{ij}}F_{ih}(p,\bold{k}) 
\end{aligned}    
\end{equation}
Therefore, the overall $\nabla_{k}F(p,\bold{k})$ is given as the stacked vector of potential fields
$\nabla_{k}F(p,\bold{k})= [ \nabla_{k_{e_1}}F(p,\bold{k})^T,...,\nabla_{k_{e_{|\mathcal{E}|}}}F(p,\bold{k})^T]^T $. Another form can be shown as 
\begin{equation}\label{eq:del_F2}
\begin{aligned}
\nabla_{k}F(p,\bold{k})= [ \sum_{(i,h)\in \mathcal{E}} \nabla_{k_{e_{1}^{(i)}}}F_{ih}(p,\bold{k})^T,..., \\ 
\sum_{(i,h)\in \mathcal{E}}\nabla_{k_{e_{|\mathcal{E}|}^{(i)}}}F_{ih}(p,\bold{k})^T]^T 
\end{aligned}    
\end{equation}
  where $e_k^{(i)}$ is the $k^{th}$ edge $(i,h) \in \mathcal{E}$ with starting vertex(i.e. only outgoing directed edges) $i$. Note that the contribution to $\nabla_{k_{ij}}F(p,\bold{k})$ is from the directed edges only for all $(i,j)\in \mathcal{E}$ and the term $w_{ij}$.
As it will become clear in the next section, the form of
term $w_{ij}$ is the result of the Lyapunov-based design and is required to enforce the negative semi-definiteness of $\dot{V}(s,\bold{k})$.
\subsection{Adaptive Control Stability Analysis }
Let us now derive the overall extended dynamics of the MRS where each robot evolves according to \eqref{eq:gen_con_law} where gains are updated according to \eqref{eq:uij}. In particular, from $V(s,\bold{k})$ we obtain
\begin{equation}\label{eq:adp_law}
\begin{aligned}
\dot{s}&=-\nabla_{s+}\hat{V}(s,\bold{k}) \quad \dot{\bold{k}}&= -\nabla_{k}F(p,\bold{k}) + \bold{w}
\end{aligned}    
\end{equation}
where $s+$ in $\dot{s}$ indicates that it has contributions only from the starting vertex of
each edge\footnote{Refer to \cite{santilli2019distributed} for complete structure of $\dot{s}$. Also note that $\dot{p}$ can be represented in the same way as $\dot{s}$ as it consists of subset of state variables of $\dot{s}$.} and the stacked vector $\bold{w}$ is defined similarly to the stacked vector $\bold{k}$ as follows 
\begin{equation}\label{eq:w}
\begin{aligned}
\bold{w}=[w_{e_1},...,w_{e_{\mathcal{E}}}]
\end{aligned}    
\end{equation}
that is a vector collecting the terms $\{w_{ij}\}$ with $(i,j)\in \mathcal{E}$.
In order to demonstrate the theoretical properties of the adaptive control law given in \eqref{eq:adp_law}, the following convexity result of the cost $F_{ij}(p,\bold{k})$ is given.

\begin{theorem} \label{th:convx_F}
The potential function $F(p, \bold{k})$ given as a sum
of pairwise cost $F_{ij} (p, \bold{k})$ encoding the deviation of each pairwise interaction from its nominal model is convex in $\bold{k}$ for a fixed $p$.
\end{theorem}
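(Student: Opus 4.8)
The plan is to exploit the structure of $F$ as a sum of pairwise terms $F_{ij} = \tfrac{1}{2}\|P_{ij}(\bar u_i) - m_{ij}\|^2$ and to show that, for fixed $p$, each $F_{ij}$ is the composition of the convex function $z \mapsto \tfrac12\|z\|^2$ with an affine map of $\mathbf{k}$. Since a sum of convex functions is convex, establishing convexity of every $F_{ij}$ in $\mathbf{k}$ suffices for the claim. I would therefore fix $p$ throughout and track the $\mathbf{k}$-dependence of each ingredient appearing in \eqref{eq:adp_F}.

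First I would observe that the control term $\bar u_i(p,\mathbf{k}) = -\sum_{h\in\mathcal{N}_i^+} k_{ih}\,\nabla_{s_i}\bar V_{ih}$ is \emph{linear} in $\mathbf{k}$ for fixed $p$: each gradient $\nabla_{s_i}\bar V_{ih}$ is a function of the configuration alone (hence constant in $\mathbf{k}$), and the gains enter only as the scalar coefficients $k_{ih}$. Next, the projection in \eqref{eq:Pij} can be written as the matrix–vector product $P_{ij}(\bar u_i) = \big(p_{ij}p_{ij}^T/\|p_{ij}\|^2\big)\,\bar u_i$, where the projection matrix depends on $p$ only; composing this fixed linear map with the linear map $\mathbf{k}\mapsto\bar u_i$ keeps $\mathbf{k}\mapsto P_{ij}(\bar u_i)$ linear. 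Finally, the target $m_{ij} = -\nabla_{p_i}\bar V_{ij}(p_{ij})$ in \eqref{eq:mij} depends on $p$ alone and is thus \emph{constant} in $\mathbf{k}$. Putting these together, the residual $P_{ij}(\bar u_i) - m_{ij}$ is an affine function of $\mathbf{k}$.

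With the residual identified as affine, I would conclude by the standard fact that the squared Euclidean norm is convex and that convexity is preserved under composition with an affine map, so each $F_{ij}(p,\cdot)$ is convex; summing over all $(i,j)\in\mathcal{E}$ yields convexity of $F(p,\cdot)$. As an equivalent and more explicit verification one may note that each $F_{ij}$ is quadratic in $\mathbf{k}$ with a constant Hessian of Gram form $A_{ij}^T A_{ij}\succeq 0$, where $A_{ij}$ is the Jacobian of the affine residual with respect to $\mathbf{k}$, so that $\nabla^2_{\mathbf{k}} F = \sum_{(i,j)\in\mathcal{E}} A_{ij}^T A_{ij}$ is positive semidefinite.

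The main obstacle, and the only place requiring care, is verifying that no hidden $\mathbf{k}$-dependence enters through $P_{ij}$ or $m_{ij}$: one must confirm that the line-of-sight direction $p_{ij}$ and the potential gradients are functions of the configuration $p$ only, so that the sole $\mathbf{k}$-dependence is the linear one carried by $\bar u_i$. Once this is checked, the result follows directly from the convex-composition rule and closure of convexity under summation.
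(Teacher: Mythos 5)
Your proposal is correct and takes essentially the same approach as the paper: both reduce each pairwise term to the form $F_{ij} = \tfrac{1}{2}\|A_{ij}B_i\mathbf{k} - m_{ij}\|^2$, where $A_{ij} = p_{ij}p_{ij}^T/\|p_{ij}\|^2$ is the ($p$-only) projection matrix, $B_i\mathbf{k}$ captures the linear dependence of $\bar{u}_i$ on the gains, and $m_{ij}$ is constant in $\mathbf{k}$, then conclude with the sum rule for convex functions. The only difference is cosmetic: the paper certifies convexity of each $F_{ij}$ by explicitly forming the Hessian $B_i^T A_{ij}^2 B_i$ and exhibiting its nonnegative spectrum via the eigendecomposition of $A_{ij}$, whereas you invoke the composition-of-affine-map rule, and you even note the equivalent Gram-form Hessian $\nabla^2_{\mathbf{k}}F \succeq 0$ as a secondary verification.
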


\begin{proof}

First, we rewrite $P_{ij}(\Bar{u}_i(p,\bold{k}))$ as
\begin{equation}\label{eq:Pij2}
\begin{aligned}
  P_{ij}(\Bar{u}_i(p,\bold{k}))=\underbrace{\frac{1}{\|p_{ij}\|^2} (p_{ij} p_{ij}^T) }_{A_{ij}} \Bar{u}_i
  \end{aligned}    
\end{equation}
with $A_{ij}$ is a symmetric, positive semidefinite ``projection" matrix
with spectrum $\sigma(A_{ij})=\{0,1\}$. Therefore we can rewrite the function $F_{ij}(p,\bold{k})$ as
\begin{equation}\label{eq:new_Fij}
\begin{aligned}
  F_{ij}(p,\bold{k}) = \frac{1}{2}\|A_{ij}B_i \bold{k} - m_{ij}  \|^2
  \end{aligned}    
\end{equation}
where the matrix $B_i \in \mathbb{R}^{2\times |\mathcal{E}|}$ is defined as 
\begin{equation}\label{eq:Bi}
\begin{aligned}
B_i=[0,...,-\nabla_{p_i}\Bar{V}_{ij},0,...,-\nabla_{p_i}\Bar{V}_{ik},...,0]
  \end{aligned}    
\end{equation}
where all $-\nabla_{p_{i}}\Bar{V}_{ij}$ for which $j\in \mathcal{N}_i^+$ appear. 
At this point we compute the Hessian of $F_{ij}$ as defined in \eqref{eq:new_Fij} with respect to $\bold{k}$ for a fixed $p$ to obtain  
\begin{equation}\label{eq:Hessian}
\begin{aligned}
H_{\bold{k}}= (A_{ij}B_i)^T (A_{ij}B_i) = B_i^T A_{ij}^T A_{ij} B_i
  \end{aligned}    
\end{equation}
Since $A_{ij}$ is symmetric positive semidefinite, we can write $A_{ij}^2=Q^T\Lambda Q$ for a unitary matrix $Q$ and $\Lambda = \textbf{diag}\{0,1\}$ to obtain the form
\begin{equation}\label{eq:Hessian2}
\begin{aligned}
H_{\bold{k}}&=B_i^T A_{ij}^2 B_i = 
 (QB_i)^T \Lambda (QB_i)= Y^T \Lambda Y
  \end{aligned}    
\end{equation}
from which we notice that the construction of the hessian $H$ has spectrum $\sigma (H)= \{0,...,0,\sum_{i=1}^{|\mathcal{E}|}y^2_i\}$ where $Y=[y_i,...,y_{|\mathcal{E}|}]$, thus proving the convexity of $F_{ij}$.
To conclude, by invoking the sum-rule of convex functions \cite{boyd2004convex}, the result follows.
\end{proof}
We are now ready to state our main result on the stability and
convergence of the proposed adaptive potential-based control
law given in \eqref{eq:adp_law}.
\begin{theorem}\label{th:gain_stab}
Consider an MRS running \eqref{eq:adp_law} designed according to a generalized potential function as in \eqref{eq:adp_V} with $w_{ij}$
\begin{equation}\label{eq:wij2}
\begin{aligned}
w_{ij}= \frac{1}{\alpha_{ij}}(\gamma_{ij} + \frac{1}{|\mathcal{N}_i^+| + |\mathcal{N}_i^-|} \beta_{i} + \frac{1}{|\mathcal{N}_j^+| + |\mathcal{N}_j^-|} \beta_{j})
\end{aligned}    
\end{equation}
Then the MRS converges to a final state $[s_f^T,\bold{k}_f^T]$ reaching one of the critical points of $\hat{V}(s,\bold{k}_f)$ and the global optimum of $F(p_f,\bold{k})$.
\end{theorem}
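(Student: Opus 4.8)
The plan is to run a Lyapunov argument on the coupled state--gain dynamics \eqref{eq:adp_law} using the composite function $V(s,\mathbf{k})$ of \eqref{eq:adp_V}, and then invoke LaSalle's invariance principle to pin down the limit set. First I would compute the total time derivative $\dot{V} = \nabla_s V^T \dot{s} + \nabla_{\mathbf{k}} V^T \dot{\mathbf{k}}$ along trajectories, substituting $\dot{s} = -\nabla_{s+}\hat{V}$ and $\dot{\mathbf{k}} = -\nabla_{\mathbf{k}} F + \mathbf{w}$. Since $V = \hat{V} + F$, with $\hat{V}$ depending on $(s,\mathbf{k})$ and $F$ depending on $(p,\mathbf{k})$, expanding produces four groups: (i) a \emph{state-descent} term $\nabla_s \hat{V}^T(-\nabla_{s+}\hat{V})$; (ii) a \emph{gain-descent} term $\nabla_{\mathbf{k}} F^T(-\nabla_{\mathbf{k}} F) = -\|\nabla_{\mathbf{k}} F\|^2 \leq 0$; (iii) sign-indefinite cross terms $\nabla_{\mathbf{k}}\hat{V}^T(-\nabla_{\mathbf{k}} F)$ and $\nabla_p F^T \dot{p}$ coupling the two subsystems; and (iv) the injected-design term $\nabla_{\mathbf{k}} V^T \mathbf{w}$.

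For group (i) I would reuse the directed-graph machinery behind Theorem~\ref{th:1}: stacking the outgoing-gradient contributions into $\xi$ as in \eqref{eq:xi}, the state-descent term collapses to a gain-weighted nonpositive quadratic of the form appearing in \eqref{eq:dotV:2}, which is $\leq 0$ precisely because $\overline{\mathcal{L}}^{\symm}$ is positive semi-definite. Group (ii) is nonpositive by inspection. The crux is group (iii): on their own these cross terms have indefinite sign and must be neutralized, and this is exactly the purpose of the signal $\mathbf{w}$ in \eqref{eq:wij2}.

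The key step is therefore to verify that the edge-wise choice $w_{ij} = \frac{1}{\alpha_{ij}}\bigl(\gamma_{ij} + \frac{1}{|\mathcal{N}_i^+|+|\mathcal{N}_i^-|}\beta_i + \frac{1}{|\mathcal{N}_j^+|+|\mathcal{N}_j^-|}\beta_j\bigr)$ makes group (iv) cancel group (iii). I would identify $\gamma_{ij}$ with the edge-local part of $\nabla_{\mathbf{k}}\hat{V}^T \nabla_{\mathbf{k}} F$ and the node-level quantities $\beta_i, \beta_j$ with the position-coupling contributions $\nabla_{p_i} F$ and $\nabla_{p_j} F$ that enter through $\dot{p}$, while the degree normalizations $1/(|\mathcal{N}_i^+|+|\mathcal{N}_i^-|)$ partition each node term evenly across the incident edges so that summing the edge-wise pieces reconstitutes the full node contributions. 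Summing $\nabla_{\mathbf{k}} V^T \mathbf{w}$ over $\mathcal{E}$ should then reproduce the negative of the cross terms, leaving $\dot{V}$ equal to a nonpositive quadratic in $\xi$ minus $\|\nabla_{\mathbf{k}} F\|^2$, hence $\dot{V}\leq 0$. Carrying out this cancellation, with correct accounting for the projection matrices $A_{ij}$ of \eqref{eq:Pij2} and for the gap between the full state gradient and the outgoing gradient $\nabla_{s+}$, is the main obstacle: the bookkeeping over directed edges and shared endpoints is delicate, and one must show the degree-normalized terms partition the node quantities \emph{exactly}.

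Finally, with $\dot{V}\leq 0$ and $V$ bounded below on the compact invariant level sets $\Delta_c$ inherited from \cite{santilli:2020}, LaSalle's invariance principle gives convergence to the largest invariant subset of $\{\dot{V}=0\}$. On that set both nonpositive contributions vanish simultaneously: $\|\nabla_{\mathbf{k}} F\|=0$ forces $\nabla_{\mathbf{k}} F = 0$, and by the convexity of $F$ in $\mathbf{k}$ established in Theorem~\ref{th:convx_F} this stationary point is the global optimum of $F(p_f,\cdot)$; at the same time the vanishing state quadratic places $\xi$ in the nullspace of $\overline{\mathcal{L}}^{\symm}$, i.e. $\nabla_{s+}\hat{V}(s_f,\mathbf{k}_f)=0$, which is exactly a critical point of $\hat{V}(\cdot,\mathbf{k}_f)$. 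This delivers the claimed final state $[s_f^T,\mathbf{k}_f^T]$ and completes the argument.
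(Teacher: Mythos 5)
Your proposal is correct and follows essentially the same route as the paper's own proof: the same composite Lyapunov function $V=\hat{V}+F$, the same decomposition of $\dot{V}$ into the nonpositive quadratic $-\xi_{\mathbf{k}}^T\,\overline{\mathcal{L}}^{\symm}\,\xi_{\mathbf{k}}$, the gain-descent term $-\|\nabla_{\mathbf{k}}F\|^2$, the indefinite cross terms, and the injected term $(\nabla_{\mathbf{k}}\hat{V}+\nabla_{\mathbf{k}}F)^T\mathbf{w}$, with the identical identification of $\alpha_{ij}$, $\gamma_{ij}$, $\beta_i$, $\beta_j$ and the degree-normalized edge-wise partition that makes \eqref{eq:wij2} (taken at equality) cancel the cross terms exactly, followed by the same LaSalle argument combined with Theorem~\ref{th:1} and the convexity result of Theorem~\ref{th:convx_F}. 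No substantive difference from the paper's argument.
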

\begin{proof}
Let us consider the Lyapunov function given in \eqref{eq:adp_V} and let us compute the time derivative as follows:
\begin{equation}\label{eq:adp_vdot}
\begin{aligned}
\dot{V}(s,\bold{k}) &= \nabla_s \hat{V}(s,\bold{k})^T \dot{s} + \nabla_{\bold{k}} \hat{V}(s,\bold{k})^T \dot{\bold{k}}\\
&+ \nabla_p F(p,\bold{k})^T \dot{p}+\nabla_{\bold{k}} F(p,\bold{k})^T \dot{\bold{k}}\\
&= - \xi_{\bold{k}}^T \, \overline{\mathcal{L}}^+ \, \xi_{\bold{k}} - \| \nabla_{\bold{k}}F(p,\bold{k})^T\|^2\\
&- \nabla_{p}F(p,\bold{k})^T\nabla_{p+}\hat{V}(p,\bold{k})\\
&-\nabla_{\bold{k}}\hat{V}(s,\bold{k})^T \nabla_{\bold{k}}F(p,\bold{k})\\
&+(\nabla_{\bold{k}}\hat{V}(s,\bold{k})^T+ \nabla_{\bold{k}}F(p,\bold{k})^T)
\bold{w}
\end{aligned}    
\end{equation}
Before we dwell into any further derivation, we will like to present the different structures of the elements of equation \eqref{eq:adp_vdot} for better clarity.  From equations $(25)$, $(29)$ in \cite{santilli2019distributed} and \eqref{eq:adp_law}, we already know the structures of $\nabla_s \hat{V}(s,\bold{k})$, $\dot{s}$, $\dot{p}$ or $\nabla_{p+}\hat{V}(p,\bold{k})$ and $\dot{\bold{k}}$, respectively. Equations \eqref{eq:del_F} and \eqref{eq:del_F2} provide the structure for $\nabla_{\bold{k}} F(p,\bold{k})$.  Below, we show the structures of $\nabla_p F(p,\bold{k})$ .
\begin{equation}\label{eq:gradF-derivation:1}
\begin{aligned}
    \nabla_{\mathbf{p}}F &=  
     \begin{aligned}  \Big [
    \sum \limits_{j \in \mathcal{N}_1^+} \nabla_{p_1} F_{1j}^T &+  \sum \limits_{j \in \mathcal{N}_1^-} \nabla_{p_1} F_{j1}^T \hspace{3mm}, \,\, \ldots \, \, ,  \\
    &+ \sum \limits_{j \in \mathcal{N}_n^+} \nabla_{p_n} F_{nj}^T + \sum \limits_{j \in \mathcal{N}_n^-} \nabla_{p_n} F_{jn}^T \Big ]^T
    \end{aligned}  \\
\end{aligned}
\end{equation}
With the gradient contributions related to the robot~$h$, we have
\begin{equation}\label{eq:gradF-derivation:2}
\begin{aligned}
    &\nabla_{p_h} F = \sum \limits_{j \in \mathcal{N}_h^+} \nabla_{p_h} F_{hj} + \sum \limits_{j \in \mathcal{N}_h^-} \nabla_{p_h} F_{jh} \\
    &= \begin{bmatrix}
    \sum \limits_{j \in \mathcal{N}_h^+} \nabla_{x_h} F_{hj} \\
    \sum \limits_{j \in \mathcal{N}_h^+} \nabla_{y_h} F_{hj}) \\
    \end{bmatrix} 
    -  \begin{bmatrix}
    \sum \limits_{j \in \mathcal{N}_h^-} \nabla_{x_j} F_{jh} \\
    \sum \limits_{j \in \mathcal{N}_h^-} \nabla_{y_j} F_{jh} \\
\end{bmatrix}
\end{aligned}
\end{equation}
Equation \eqref{eq:gradF-derivation:2} is valid because $\nabla_{\mathbf{p}}F$ too satisfies Lemma \ref{lemma:potential-property-psi} 
implying $\nabla_{\mathbf{p}}F$ is similar to $\nabla_{\mathbf{s}}\hat{V}$(from \cite{santilli2019distributed}) in structure. 
Finally, the structure of $\nabla_{\bold{k}} \hat{V}$ is given as
\begin{equation}\label{eq:delk_V2}
\begin{aligned}
\nabla_{k}\hat{V}(s,\bold{k})= 
 [  \nabla_{k_{e_{1}}}\hat{V}(s,\bold{k})^T,...,\nabla_{k_{e_{|\mathcal{E}|}}} \hat{V}(s,\bold{k})^T] \\ 
=[  \nabla_{k_{e_{1}^{(i)}}}\hat{V}_{ij}(s,\bold{k})^T,...,\nabla_{k_{e_{|\mathcal{E}|}^{(i)}}}\hat{V}_{ij}(s,\bold{k})^T]^T 
\end{aligned}    
\end{equation}
From the above derivation, it follows that in order to ensure negative-semidefiniteness of the Lyapunov derivative the following must hold:
\begin{equation}\label{eq:psd_cond}
\begin{aligned}
&- \nabla_{p}F(p,\bold{k})^T\nabla_{p+}\hat{V}(p,\bold{k})
-\nabla_{\bold{k}}\hat{V}(s,\bold{k})^T \nabla_{\bold{k}}F(p,\bold{k})\\
&+(\nabla_{\bold{k}}\hat{V}(s,\bold{k})^T+ \nabla_{\bold{k}}F(p,\bold{k})^T)
\bold{w}\leq 0
\end{aligned}    
\end{equation}
Therefore, to satisfy relation \eqref{eq:psd_cond} and consequently the negative semi-definiteness of $\dot{V}$, we derive a per-edge form of $\bold{w}$  as:
\begin{equation}\label{eq:psd_cond_edg}
\begin{aligned}
&\sum_{(i,j)\in \mathcal{E}}\left (\underbrace{\nabla_{k_{ij}}\hat{V}(s,\bold{k})^T+\nabla_{k_{ij}}F(p,\bold{k})^T}_{\alpha_{ij}}\right )w_{ij}   \leq   \\
&\sum_{(i,j)\in \mathcal{E}} \underbrace{\nabla_{k_{ij}} \hat{V}(s,\bold{k})^T \nabla_{k_{ij}} F(p,\bold{k})}_{\gamma_{ij}} +\\
& \sum_{(i,j)\in \mathcal{E}} ( \frac{1}{|\mathcal{N}_i^+| + |\mathcal{N}_i^-|} \underbrace{\nabla_{p_i} F(p,\bold{k})^T \nabla_{p+_i}\hat{V}(p,\bold{k})}_{\beta_i}+\\
& \frac{1}{|\mathcal{N}_j^+| + |\mathcal{N}_j^-|} \underbrace{\nabla_{p_j} F(p,\bold{k})^T \nabla_{p+_j}\hat{V}(p,\bold{k})}_{\beta_j})
\end{aligned}    
\end{equation}
where we can now rewrite \eqref{eq:psd_cond_edg} fully per-edge as:
\begin{equation}\label{eq:fin_wij}
\begin{aligned}
w_{ij}\leq \frac{1}{\alpha_{ij}}(\gamma_{ij} + \frac{1}{|\mathcal{N}_i^+| + |\mathcal{N}_i^-|} \beta_{i} + \frac{1}{|\mathcal{N}_j^+| + |\mathcal{N}_j^-|} \beta_{j})
\end{aligned}    
\end{equation}
At this point, by taking \eqref{eq:fin_wij} at the equality and plugging it into \eqref{eq:adp_vdot} , the following expression for the Lyapunov time derivative is obtained:
\begin{equation}\label{eq:fin_adp_V}
\begin{aligned}
\dot{V}(s,\bold{k})= - \xi_{\bold{k}}^T \, \overline{\mathcal{L}}^{\symm} \, \xi_{\bold{k}} - \| \nabla_{\bold{k}}F(p,\bold{k})\|^2 \leq 0
\end{aligned}    
\end{equation}
where $\xi_{\bold{k}}$ contains $\bold{k}$. 
From Theorem \ref{th:1} we know the first part of the above relation is negative semidefinite and from Theorem \ref{th:convx_F} that the function $F(p,\bold{k})$ is convex with respect to $\bold{k}$ for a fixed $s$. Therefore, by invoking the LaSalle's invariance theorem the result follows.
\end{proof}
\begin{remark}
Note that a singularity situation with the control term $w_{ij}$ will never occur  for $\alpha_{ij}=0$ because $\dot{V}(s,\bold{k})$ is negative semi-definite and $V(s,\bold{k})$ is positive-definite as per Theorem \ref{th:gain_stab}.
\end{remark}

\section{Resilient Control Design}
In this section we derive an $H_{\infty}$ control protocol with the intention of making the potential based control framework with adaptive gains for MRS resilient to external attacks that induce faults in the sensor and actuator which can render the MRS unstable. 
We know from the original form of the Gaussian potential function $\Psi_{ij}$ given in \cite{santilli2019distributed}, which represents the quality of interaction between robots, that the variances, $\sigma_x$ and $\sigma_y$,  are influenced by the sensor measurements in $p=[x,y]^T$. Consequently, since the MRS control framework is based on potential based control, a typical part of the overall control law is given by $\nabla_{p_i}\Psi_{ij}(p_i,p_j)$ as also shown in equation \eqref{eq:u_i}. Therefore it is reasonable to assume faults in the state variables $x$ and $y$ only. In this regard, we take the approach provided in \cite{chen2019resilient} that is to simultaneously tackle sensor and actuator fault for which we design an $H_{\infty}$ control protocol specifically for the sensor fault but also model the actuator fault as a part of the system disturbances.   In this direction, we define fault equations as
\begin{equation}\label{eq:sens_fault2}
\begin{aligned}
\Bar{p}_i = p_i + \delta^p_i \quad \hat{u}_{p_i} = \Bar{u}_{p_i} + \delta^u_i
\end{aligned}    
\end{equation}
where $\Bar{p}_i$ is the faulty sensor measurement, $p_i$ is the actual measurement which is unknown, $\delta^p_i$ is the unknown sensor fault, $\hat{u}_{p_i}$ is the faulty control input only with respect to states $p$, $\Bar{u}_{p_i}$ is the desired control input again only with respect to states $p$ such that the overall control action is $\Bar{u}_i=[\Bar{u}_{p_i}^T,\Bar{u}_{\theta_i} ]^T$ and $\delta^u_i$ is the unknown actuator fault induced by the attacker. Note that, it is trivial to repeat this analysis with actuator fault with respect to $\theta$ but for consistency of dimensions, we disregard it for this analysis. Now, to proceed with designing the $H_{\infty}$ control protocol, we first define an error dynamics as follows
\begin{equation}\label{eq:er}
\begin{aligned}
e_i = \Bar{p}_i - \hat{p}_i - \hat{\delta}^p_i
\end{aligned}    
\end{equation}
where $\hat{p}_i$ is an estimate of the uncorrupted state variables $p_i$ and $\hat{\delta}^p_i$ is an estimate of the unknown sensor fault $\delta^p_i$. 
Thus, the $H_{\infty}$ control protocols are given by
\begin{equation}\label{eq:new_dyn}
\begin{aligned}
\dot{\hat{p}}_i = \Bar{u}_{p_i} + w_i \quad w_i =(F_1 + F_2)e_i \quad \dot{\hat{\delta}}^p_i = -F_1 e_i \\
\end{aligned}    
\end{equation}
where  $F_1$ and $F_2$ are the observer gains. 
Before, we provide the results from a theorem which will prove the boundedness of the error, we will like to provide the following definition and lemma on general static output-feedback control design . 
\begin{definition}\label{def:hinf}
Define a linear time-invariant system $\dot{p}=\Bar{A}p+\Bar{B}u+\Bar{D}d$,$y=\Bar{C}p$, where $u$,$y$ and $d$ denote the system input, output, and disturbance, respectively. Define a performance output $w$ as $\|w\|^2=p^T \Bar{Q}p+ u^T \Bar{R}u$ for $\Bar{Q}\geq 0$ and $\Bar{R}>0$. The system $L_2$ gain is said to be bounded or attenuated by $\gamma$ if the $L_2$ norms of $w$ and $d$ satisfy: $\frac{\int_{0}^{\infty} \|w\|^2 dt}{\int_{0}^{\infty} \|d\|^2 dt} = \frac{\int_{0}^{\infty} (p^T \Bar{Q}p + u^T \Bar{R}u) dt}{\int_{0}^{\infty} (d^Td) dt}\leq \gamma^2$
\end{definition}

\begin{lemma}(Theorem 1 in \cite{gadewadikar2006necessary})\label{lem:stab}
Assume that $(\Bar{A},\sqrt{\Bar{Q}})$ is detectable with $\Bar{Q\geq 0}$. Then, the system considered in definition \ref{def:hinf} is output-feedback stabilizable with $L_2$ gain bounded by $\gamma$, if and only if i) there exist matrices $\Bar{K}$, $M$, and $\Bar{P}$ such that 
\begin{equation}\label{eq:hinfcond}
\begin{aligned}
&\Bar{K}\Bar{C}=\Bar{R}^{-1}(\Bar{B}^T\Bar{P}+M)\\
&\Bar{P}\Bar{A}+\Bar{A}^T\Bar{P}+\Bar{Q}+\gamma^{-2}\Bar{P}\Bar{D}\Bar{D}^T\Bar{P}+ M^T \Bar{R}^{-1}M\\
&=\Bar{P}\Bar{B}\Bar{R}^{-1}\Bar{B}^T
\Bar{P}
\end{aligned}    
\end{equation}
and ii) $(\Bar{A},\Bar{B})$ is stabilizable and $(\Bar{A},\Bar{C})$ is detectable.
\end{lemma}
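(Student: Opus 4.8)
The plan is to prove this equivalence through the dissipativity (bounded-real-lemma) route, reducing the static output-feedback $H_{\infty}$ condition to a dissipation inequality along the closed-loop trajectories. First I would close the loop with the static output feedback $u = -\Bar{K}y = -\Bar{K}\Bar{C}p$, so that $\dot{p} = (\Bar{A} - \Bar{B}\Bar{K}\Bar{C})p + \Bar{D}d$, and propose the quadratic storage function $V(p) = p^T \Bar{P} p$ with $\Bar{P} \ge 0$. The $L_2$-gain bound of Definition~\ref{def:hinf} holds, together with internal stability, precisely when the dissipation inequality
\[
\dot{V} + \|w\|^2 - \gamma^2 \|d\|^2 = \dot{V} + p^T\Bar{Q}p + u^T\Bar{R}u - \gamma^2 d^T d \le 0
\]
is satisfied for all $p$ and all admissible $d$; integrating from $0$ to $\infty$ with $V \ge 0$ and zero initial condition then yields $\int_0^\infty \|w\|^2\,dt \le \gamma^2 \int_0^\infty \|d\|^2\,dt$.

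For the sufficiency direction, assuming (i) and (ii), I would expand $\dot{V}$ along the closed loop and complete the square over the worst-case disturbance $d^\star = \gamma^{-2}\Bar{D}^T\Bar{P}p$, which contributes exactly the term $\gamma^{-2}\Bar{P}\Bar{D}\Bar{D}^T\Bar{P}$. Substituting the structural gain constraint $\Bar{K}\Bar{C} = \Bar{R}^{-1}(\Bar{B}^T\Bar{P}+M)$ into the cross terms $-(\Bar{B}\Bar{K}\Bar{C})^T\Bar{P} - \Bar{P}\Bar{B}\Bar{K}\Bar{C}$ and into the cost term $(\Bar{K}\Bar{C})^T\Bar{R}(\Bar{K}\Bar{C})$, the $M$-linear and $\Bar{B}^T\Bar{P}$-cross contributions cancel pairwise, collapsing the left-hand side to
\[
\Bar{P}\Bar{A} + \Bar{A}^T\Bar{P} + \Bar{Q} + \gamma^{-2}\Bar{P}\Bar{D}\Bar{D}^T\Bar{P} + M^T\Bar{R}^{-1}M - \Bar{P}\Bar{B}\Bar{R}^{-1}\Bar{B}^T\Bar{P},
\]
which is exactly zero by the Riccati equation in (i); hence the dissipation inequality holds and the gain bound follows. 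Internal asymptotic stability with $d=0$ is then obtained from a LaSalle argument: on the set $\dot{V}=0$ one forces $w=0$, so $\sqrt{\Bar{Q}}\,p \equiv 0$, and detectability of $(\Bar{A},\sqrt{\Bar{Q}})$ drives $p \to 0$, while (ii) guarantees the closed-loop matrix is well posed.

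For necessity, I would argue in reverse: given a stabilizing static output feedback that attenuates the $L_2$ gain by $\gamma$, the bounded real lemma supplies a stabilizing solution $\Bar{P}\ge 0$ of the associated dissipation inequality, where detectability of $(\Bar{A},\sqrt{\Bar{Q}})$ and the stabilizability/detectability hypotheses in (ii) ensure its existence and correct definiteness. I would then define the slack matrix $M := \Bar{R}\,\Bar{K}\Bar{C} - \Bar{B}^T\Bar{P}$, so that constraint (i) holds by construction, and reading the dissipation inequality at its extremal (stabilizing) solution turns it into the stated Riccati equality.

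The hard part will be the necessity direction, and specifically the non-convex nature of the static output-feedback constraint: unlike full-state feedback, $\Bar{K}$ must factor through $\Bar{C}$, so the Riccati equation cannot be solved freely for $\Bar{P}$. The role of the slack $M$ is precisely to absorb the mismatch between the unconstrained state-feedback gain $\Bar{R}^{-1}\Bar{B}^T\Bar{P}$ and the realizable output-feedback gain $\Bar{K}\Bar{C}$, and the technical crux is showing that a \emph{consistent} triple $(\Bar{K},M,\Bar{P})$ exists, i.e., that the stabilizing $\Bar{P}$ furnished by the game-theoretic bounded-real argument is compatible with a genuine output-feedback gain. Verifying that $M^T\Bar{R}^{-1}M \ge 0$ keeps the equation consistent and that the extremal solution yields equality rather than merely inequality is where the detectability hypothesis on $(\Bar{A},\sqrt{\Bar{Q}})$ and the coupling between (i) and (ii) do the essential work.
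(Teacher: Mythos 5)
The paper offers no proof of this lemma at all --- it is quoted verbatim from Theorem 1 of \cite{gadewadikar2006necessary} and used as a black box in the resilient-control section --- so the only meaningful comparison is with that reference, whose proof your proposal essentially reconstructs. Your sufficiency route is exactly the standard one: the storage function $V(p)=p^T\Bar{P}p$, completion of squares over the worst-case disturbance $d^\star=\gamma^{-2}\Bar{D}^T\Bar{P}p$, and the cancellation that collapses the closed-loop expression $(\Bar{A}-\Bar{B}\Bar{K}\Bar{C})^T\Bar{P}+\Bar{P}(\Bar{A}-\Bar{B}\Bar{K}\Bar{C})+\Bar{Q}+(\Bar{K}\Bar{C})^T\Bar{R}\Bar{K}\Bar{C}+\gamma^{-2}\Bar{P}\Bar{D}\Bar{D}^T\Bar{P}$ onto the left-hand side of \eqref{eq:hinfcond} after substituting $\Bar{K}\Bar{C}=\Bar{R}^{-1}(\Bar{B}^T\Bar{P}+M)$ does check out, as does the necessity route (bounded real lemma for the stable closed loop, then $M:=\Bar{R}\Bar{K}\Bar{C}-\Bar{B}^T\Bar{P}$). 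Two points of emphasis are off, though neither breaks the argument. First, in the necessity direction there is no ``compatibility'' problem of the kind you flag as the crux: $\Bar{K}$ is given by hypothesis in that direction, so the triple $(\Bar{K},M,\Bar{P})$ is consistent by construction once $M$ is defined as above; the non-convexity of static output feedback is why \eqref{eq:hinfcond} is hard to \emph{solve}, not why the equivalence is hard to \emph{prove} --- the genuine care there is in which version of the bounded real lemma you invoke (Riccati equality versus inequality, strict versus non-strict attenuation) so that it returns $\Bar{P}\ge 0$ solving the equation. Second, condition (ii) is not about the closed loop being ``well posed'': it is the cheap necessary condition, since $\Bar{A}-\Bar{B}\Bar{K}\Bar{C}$ Hurwitz immediately gives stabilizability of $(\Bar{A},\Bar{B})$ (the state feedback $-\Bar{K}\Bar{C}$ stabilizes) and detectability of $(\Bar{A},\Bar{C})$ (the output injection $-\Bar{B}\Bar{K}$ works). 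Also note that in your LaSalle step $\Bar{P}$ is only guaranteed positive semidefinite, so $V$ is not a bona fide Lyapunov function; detectability of $(\Bar{A},\sqrt{\Bar{Q}})$ is precisely what rescues that step, as in the reference.
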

Before we provide the final result we define the estimation error for the sensor fault as $\Tilde{\delta}^p_i=\delta^p_i-\hat{\delta}^p_i$.
\begin{assumption}\label{asum:1}
The directed graph $\mathcal{G}$ contains a spanning tree with
the leader as its root.
\end{assumption}
\begin{assumption}\label{asum:2}
$\delta^p_i$ is unbounded but the derivative of $\dot{\delta}^p_i$ of the sensor fault in \eqref{eq:sens_fault2} is bounded. The actuator fault $\delta_i^u$ is bounded.
\end{assumption}
\begin{theorem}(Modified version of Theorem 4 in \cite{chen2019resilient} )
Suppose that the graph $\mathcal{G}$ satisfies Assumption \ref{asum:1}, the sensor and actuator faults satisfy Assumption \ref{asum:2}. Design $F_1$ and $F_2$ such that $\Bar{K}=[F_2^T,-F_1^T]$ follows \eqref{eq:hinfcond} in Lemma \ref{lem:stab}. Then for the MRS topology control with limited FOV under the sensor and actuator faults \eqref{eq:sens_fault2} is solved by the $H_{\infty}$ control protocol \eqref{eq:new_dyn}. Moreover, the $L_2$ gains of the errors $e_i$ and $\Tilde{\delta}^p_i$ are bounded in terms of the $L_2$ norms of disturbance $d_i$.
\begin{proof}

Define $d=[d_i^T, d_2^T,...,d_n^T]^T$, $e=[e_1^T,e_2^T,...,e_n^T]^T$, and $\Tilde{\delta}^p=[\Tilde{\delta}^p_1,\Tilde{\delta}^p_2,...,\Tilde{\delta}^p_n]^T$. From $\dot{\mathbf{s}}(t) = \mathbf{u}(t)$ and \eqref{eq:new_dyn}, differentiating $e_i$ in \eqref{eq:er}
with respect to time $t$ yields 
\begin{equation}\label{eq:err_dyn}
\begin{aligned}
&\dot{e}_i= \dot{\Bar{p}}_i - \dot{\hat{p}}_i - \dot{\hat{\delta}}^p\\
&=\hat{u}_{p_i} -  ( \Bar{u}_{p_i} + w_i) + F_1 e_i = - F_2 e_i + \delta_i^u \\
\end{aligned}    
\end{equation}
Thus, using \eqref{eq:new_dyn} we can show
\begin{equation}\label{eq:hinfcond23}
\begin{aligned}
 \begin{bmatrix} \dot{e}_i & \dot{\Tilde{\delta}}^p_i  \end{bmatrix}^T=A_{F_1}
 \begin{bmatrix} e_i & \Tilde{\delta}^p_i  \end{bmatrix}^T + d_i
\end{aligned}    
\end{equation}
where $A_{F_1}=[a_{f_{ij}}^1]$ with $a_{f_{11}}^1= -F_2$,$a_{f_{12}}^1= 0$,$a_{f_{21}}^1= F_1$,$a_{f_{22}}^1= 0$ and $d_i=[\delta^u_i ,\quad \dot{\delta}^p_i]^T$. In the following, we will show the stabilization of \eqref{eq:hinfcond23} can be achieved by appropriately designing $F_1$ and $F_2$. To do this, we transform \eqref{eq:hinfcond23} to the following full state output feedback control system 
\begin{equation}\label{eq:fdbck_sys}
\begin{aligned}
\dot{x}_T \triangleq \Bar{A}x_T + \Bar{B}u_T + \Bar{D}d_i, \quad y_T \triangleq \Bar{C}x_T
\end{aligned}    
\end{equation}
where $\Bar{A}=[A,0;0,0]$ with $A=I_n$, $\Bar{B}=\Bar{D}=[I_n,0;0,I_n]$ and $\Bar{C}=[I_n,0]$. Moreover, define the controller $u_T= -\Bar{K y_T}$, where $\Bar{K}=[K_1^T,K_2^T]^T$. Therefore, choosing $k_1=F_2 + I_n$ and $K_2=-F_1$ makes \eqref{eq:fdbck_sys} equivalent to \eqref{eq:hinfcond23}. At this point after showing \eqref{eq:fdbck_sys} is equivalent to \eqref{eq:hinfcond23}, for the sake of brevity, the rest of the proof is not shown as it is exactly the same in Theorem 4 in \cite{chen2019resilient}. In brief, it is shown that by using  Definition \ref{def:hinf} and Lemma \ref{lem:stab} and as a result of the equivalence relation, the $[e^T, \tilde{\delta}^{p^T}]^T$ is bounded.


To finish the proof , we will show the MRS control is stable under the proposed control. We define a new Lyapunov function $V_e=\frac{1}{2}\|e\|^2$ with respect to the bounded error $e$ as proved above. Taking the derivative we have $\dot{V_e}=-e^Te$.
Therefore, 
the complete system in \eqref{eq:adp_V} becomes  $\dot{V}=\dot{\hat{V}}+\dot{F}+\dot{V}_e \leq 0$, which implies that under the $H_{\infty}$ control protocols all robots satisfy the individual topological objectives represented by $\hat{V}$ and $F$, thus maintaining a stable MRS in the presence of faults. 
\end{proof}

\end{theorem}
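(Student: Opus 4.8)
The plan is to reduce the coupled sensor/actuator fault-estimation problem to the static output-feedback $H_\infty$ form covered by Lemma~\ref{lem:stab}, and then to close the loop with the adaptive-topology Lyapunov argument of Theorem~\ref{th:gain_stab}. First I would derive the closed-loop error dynamics: differentiating the observer error $e_i$ in \eqref{eq:er} along the plant $\dot{\mathbf{s}}=\mathbf{u}$ (so that the actually-applied, faulty input drives $\dot{\Bar{p}}_i$) together with the protocol \eqref{eq:new_dyn}, then substituting the actuator-fault model $\hat{u}_{p_i}=\Bar{u}_{p_i}+\delta_i^u$ from \eqref{eq:sens_fault2} and $w_i=(F_1+F_2)e_i$, the nominal control term $\Bar{u}_{p_i}$ cancels and one is left with $\dot{e}_i=-F_2 e_i+\delta_i^u$. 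Differentiating the fault-estimation error $\Tilde{\delta}^p_i=\delta^p_i-\hat{\delta}^p_i$ and using $\dot{\hat{\delta}}^p_i=-F_1 e_i$ then gives $\dot{\Tilde{\delta}}^p_i=F_1 e_i+\dot{\delta}^p_i$. Stacking these yields the augmented linear system \eqref{eq:hinfcond23} with state $(e_i,\Tilde{\delta}^p_i)$, system matrix $A_{F_1}$, and disturbance $d_i=[\delta_i^u,\ \dot{\delta}^p_i]^T$; here Assumption~\ref{asum:2} is exactly what makes $d_i$ finite-energy, since the actuator fault and $\dot{\delta}^p_i$ are bounded even though $\delta^p_i$ itself is not.

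The second step is to recognize \eqref{eq:hinfcond23} as a static output-feedback system. I would introduce the auxiliary plant \eqref{eq:fdbck_sys} with $\Bar{A}=\mathrm{diag}(I_n,0)$, $\Bar{B}=\Bar{D}=I_{2n}$ in block form, $\Bar{C}=[I_n,\ 0]$, and the feedback $u_T=-\Bar{K}y_T$, $\Bar{K}=[K_1^T,K_2^T]^T$. A direct block computation of $\Bar{A}-\Bar{B}\Bar{K}\Bar{C}$ shows that the choice $K_1=F_2+I_n$, $K_2=-F_1$ reproduces $A_{F_1}$ exactly, so the two systems are equivalent. Since by hypothesis $\Bar{K}=[F_2^T,-F_1^T]$ is designed to satisfy the coupled conditions \eqref{eq:hinfcond}, and the chosen triple $(\Bar{A},\Bar{B},\Bar{C})$ is stabilizable and detectable (with Assumption~\ref{asum:1} supplying the network-level detectability through the spanning tree), Lemma~\ref{lem:stab} applies verbatim. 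Invoking Definition~\ref{def:hinf}, this certifies that the closed loop is output-feedback stabilizable with $L_2$ gain bounded by $\gamma$, i.e. the energy of the performance output built from $(e,\Tilde{\delta}^p)$ is attenuated by $\gamma^2$ relative to the energy of $d$---precisely the claimed bound on the $L_2$ gains of $e_i$ and $\Tilde{\delta}^p_i$. Once the equivalence is in hand I would simply defer the remaining algebra to Theorem~4 in \cite{chen2019resilient}.

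The final step is to confirm that the surviving topology objective remains Lyapunov-stable under the protocol. I would take $V_e=\tfrac12\|e\|^2$ and differentiate along $\dot{e}=-F_2 e+\delta^u$, obtaining a negative (semi)definite $\dot{V}_e$ for the designed $F_2$; then, augmenting the adaptive Lyapunov function \eqref{eq:adp_V} with $V_e$ and using that $\dot{\hat{V}}+\dot{F}\le 0$ from Theorems~\ref{th:1}, \ref{th:convx_F} and \ref{th:gain_stab}, I would conclude $\dot{V}=\dot{\hat{V}}+\dot{F}+\dot{V}_e\le 0$, so the MRS stays stable and still meets the objectives encoded by $\hat{V}$ and $F$ in the presence of faults.

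The main obstacle I anticipate is not the algebra but verifying that a single gain $\Bar{K}=[F_2^T,-F_1^T]$ can simultaneously reproduce $A_{F_1}$ through the equivalence and solve the coupled generalized-Riccati/output-feedback conditions \eqref{eq:hinfcond}. Establishing solvability of those conditions for the limited-FOV potential-based structure---where $\Bar{u}_{p_i}$, and hence the effective plant, depends on the directed interaction graph---hinges on detectability of $(\Bar{A},\sqrt{\Bar{Q}})$ and on Assumption~\ref{asum:1}. A secondary subtlety is bookkeeping in the error differentiation: one must track which input actually drives $\dot{\Bar{p}}_i$ and ensure the unbounded $\delta^p_i$ enters only through its bounded derivative, so that $d$ indeed has the finite $L_2$ norm required by Definition~\ref{def:hinf}.
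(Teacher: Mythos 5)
Your proposal follows essentially the same route as the paper's own proof: differentiating the observer and fault-estimation errors to obtain $\dot{e}_i=-F_2e_i+\delta_i^u$ and $\dot{\Tilde{\delta}}^p_i=F_1e_i+\dot{\delta}^p_i$, stacking them into the augmented system \eqref{eq:hinfcond23}, recasting it as the static output-feedback plant \eqref{eq:fdbck_sys} with $K_1=F_2+I_n$, $K_2=-F_1$, invoking Definition~\ref{def:hinf}, Lemma~\ref{lem:stab} and Theorem~4 of \cite{chen2019resilient} for the $L_2$ bound, and closing with the composite Lyapunov function $\dot{V}=\dot{\hat{V}}+\dot{F}+\dot{V}_e\le 0$. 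Your treatment is in fact slightly more careful than the paper's in two places: you differentiate $\dot{V}_e$ along the true dynamics $\dot{e}=-F_2e+\delta^u$ rather than asserting $\dot{V}_e=-e^Te$ outright, and you flag explicitly that Assumption~\ref{asum:2} is what makes $d_i$ finite-energy.
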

\section{Decentralized Reinforcement learning to solve Adaptive controller online}
In this section we provide an alternative method to solve for the optimal function $F(p,\bold{k})$, that is solving for optimal pairwise gains $k^*_{ij}$ using an online Q-Learning formulation.
\subsection{Direct Estimation of Q-Function}
First, we will convert the continuous dynamics of the system to discrete form. The discrete dynamics per-robot is given by
\begin{equation}\label{eq:q-gen}
\begin{aligned}
s_i(t+1)= s_i(t) + \Delta_t\Bar{u}_i(t)
\end{aligned}    
\end{equation}
where $\Delta_t$ is the unit time step of the discrete time system.
Now, we know the general form of the Q-function is given as
\begin{equation}\label{eq:q-gen2}
\begin{aligned}
Q(x_t,u_t)= c(x_t,u_t)+ \gamma Q(f(x_{t+1},u_{t+1}))
\end{aligned}    
\end{equation}
where $c(x_t,u_t)$ cost per stage and $\gamma$ is the discount factor.
For a direct estimation of the above Q-function , we know intuitively that \eqref{eq:new_Fij}, which resembles a typical \emph{least squares} form, can be represented to  look like
\begin{equation}\label{eq:q-est1}
\begin{aligned}
c(x_t,u_t)= \phi_t^T \Theta_q
\end{aligned}    
\end{equation}
where $q$ denotes policy change steps. The form in \eqref{eq:q-est1} and the general Q-Learning algorithm is well studied in literature \cite{lewis2012reinforcement}.
Therefore, we define a Q-function in the same way as
\begin{equation}\label{eq:q-est2}
\begin{aligned}
\small \underbrace{(-A_{ij}(t)B_{ij}(t) - \gamma^l (-A_{ij}(t+1)B_{ij}(t+1)) )^T}_{\phi_t^T}\underbrace{\bold{k}_{ij}}_{\Theta} = \underbrace{m_{ij}(t)}_{c(x_t,u_t)}
\end{aligned}    
\end{equation}
The above is solved by each robot $i$ for each of its out-going edges $j\in \mathcal{N}_i^+$ using recursive least square (RLS) method. The RLS method will help solve for the exact estimate $\hat{\Theta}$ of $\Theta$. However, as already well know, for the RLS to converge, $\phi_t$ must satify the persistence of excitation (PE) condition given by the following definition.
\begin{definition}\label{def:pe}
The matrix sequence $\phi_t$ is said to be persistently exciting (PE) if for some constant $T_{in}$ there exists positive constants $\epsilon_0$ and $\epsilon_1$ such that
\begin{equation}\label{eq:pe2}
\begin{aligned}
\epsilon_0I\leq \frac{1}{T_{in}}\sum_{l=1}^{T_{in}} \phi_{t-l}\phi_{t-l}^T\leq \epsilon_1I \forall t\geq t_0>0
\end{aligned}    
\end{equation}
where $t$ is the total time step, $l$ is the time step since the last policy change and $T_{in}$ is the time interval between consecutive $q$ policy changes. 
\end{definition}
\subsection{Convergence to Optimal Policy }
In this section, we show that the our Q-Learning algorithm formulated using the general form in \cite{lewis2012reinforcement} converges to solve for optimal gain $k^*_{ij}$. Therefore, before we present the convergence results in Theorem \ref{th:q_conv}, we provide Lemma \ref{lem:rls_conv}.
\begin{lemma}(Theorem 2 in \cite{islam2019recursive})\label{lem:rls_conv}
Suppose $\phi_t$ is PE. Then, for $l=1$ to $T_{in}$ the RLS estimate $\hat{\Theta}_m$ converges to actual $\Theta$ which also minimizes 
\begin{equation}\label{eq:rls_min}
\begin{aligned}
\min_{\Theta} \| c_t - \phi_t^T \Theta      \|^2
\end{aligned}    
\end{equation}
\end{lemma}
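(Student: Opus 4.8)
The plan is to treat this as the classical convergence result for recursive least squares under persistency of excitation, adapting the argument of \cite{islam2019recursive} to the present regressor sequence $\phi_t$. First I would write the RLS recursion explicitly: the gain update $K_t = P_{t-1}\phi_t(I + \phi_t^T P_{t-1}\phi_t)^{-1}$, the covariance update $P_t = (I - K_t\phi_t^T)P_{t-1}$, and the parameter update $\hat{\Theta}_t = \hat{\Theta}_{t-1} + K_t(c_t - \phi_t^T\hat{\Theta}_{t-1})$, and then introduce the estimation error $\tilde{\Theta}_t = \hat{\Theta}_t - \Theta$. The central algebraic fact I would establish is the telescoping identity for the information matrix, $P_t^{-1} = P_0^{-1} + \sum_{k=1}^{t}\phi_k\phi_k^T$, from which, in the exact-fit case $c_k = \phi_k^T\Theta$, the error propagates as $P_t^{-1}\tilde{\Theta}_t = P_0^{-1}\tilde{\Theta}_0$, i.e.\ $\tilde{\Theta}_t = P_t P_0^{-1}\tilde{\Theta}_0$.

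Given this identity, the convergence claim reduces to showing $P_t \to 0$, which is exactly where the PE hypothesis of Definition \ref{def:pe} enters. The two-sided bound in \eqref{eq:pe2} guarantees that over each window of length $T_{in}$ the accumulated outer products satisfy $\sum_{l} \phi_{t-l}\phi_{t-l}^T \geq \epsilon_0 T_{in} I$, so $P_t^{-1}$ grows at least linearly in $t$ and hence $\lambda_{\min}(P_t^{-1}) \to \infty$. This forces $\norm{P_t} \to 0$, and therefore $\tilde{\Theta}_t \to 0$, establishing convergence of $\hat{\Theta}_m$ to the true $\Theta$. The upper bound $\epsilon_1 I$ in the PE condition would be used to keep the information matrix uniformly well-conditioned (and, in any noisy variant, to bound the propagated disturbance term), so that the convergence is uniform in $t$ rather than merely asymptotic.

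For the minimization statement I would invoke the defining property of RLS: at every step the estimate $\hat{\Theta}_t$ is the exact minimizer of the running quadratic cost $\sum_{k}\norm{c_k - \phi_k^T\Theta}^2$, via the normal-equation characterization $P_t^{-1}\hat{\Theta}_t = \sum_k \phi_k c_k$. Under PE the normal-equations matrix $P_t^{-1}$ is nonsingular, so the minimizer is unique; combining this with $\hat{\Theta}_m \to \Theta$ shows that the limit both equals the actual parameter and attains $\min_{\Theta} \norm{c_t - \phi_t^T\Theta}^2$, as claimed.

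I expect the main obstacle to be the rigorous passage from the interval-wise PE lower bound in \eqref{eq:pe2} to uniform unbounded growth of $\lambda_{\min}(P_t^{-1})$, together with controlling any forgetting factor or measurement noise in $c_t$. A forgetting factor replaces the plain sum by an exponentially weighted one, which can cap the growth of $P_t^{-1}$ and must be handled carefully so that $\tilde{\Theta}_t$ genuinely decays to zero rather than merely remaining bounded; this conditioning-and-decay step, rather than the algebraic recursions, is where the delicate estimates live.
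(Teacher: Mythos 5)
Your argument is sound, but there is nothing in the paper to compare it against: the authors do not prove Lemma \ref{lem:rls_conv} at all — they import it verbatim as Theorem 2 of \cite{islam2019recursive} and use it as a black box inside the proof of Theorem \ref{th:q_conv}. What you have written is a correct reconstruction of the classical noiseless RLS convergence argument that underlies such a result: the matrix-inversion-lemma telescoping $P_t^{-1}=P_0^{-1}+\sum_{k=1}^{t}\phi_k\phi_k^T$, the error-propagation identity $P_t^{-1}\tilde{\Theta}_t=P_0^{-1}\tilde{\Theta}_0$ under exact fit $c_k=\phi_k^T\Theta$, and the window-wise lower bound of \eqref{eq:pe2} forcing $\lambda_{\min}\bigl(P_t^{-1}\bigr)\to\infty$, hence $\tilde{\Theta}_t=P_tP_0^{-1}\tilde{\Theta}_0\to 0$. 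The exact-fit hypothesis is appropriate here, since \eqref{eq:q-est2} defines the regression so that the true gain vector satisfies it identically (there is no measurement noise in the Q-function model). Two refinements worth making: (i) with a finite initialization $P_0$, the RLS iterate minimizes the regularized cost $(\Theta-\hat{\Theta}_0)^TP_0^{-1}(\Theta-\hat{\Theta}_0)+\sum_{k}\|c_k-\phi_k^T\Theta\|^2$ rather than the bare sum in \eqref{eq:rls_min}, so your normal-equation step should carry (and then discard, asymptotically) the $P_0^{-1}$ term; (ii) the lemma as stated claims convergence within the finite window $l=1,\ldots,T_{in}$, whereas your argument, like the classical one, is asymptotic — to match the statement you can instead observe that the PE bound makes the single-window Gramian $\sum_{l=1}^{T_{in}}\phi_{t-l}\phi_{t-l}^T$ nonsingular, so the minimizer of the windowed least-squares cost is already unique and equal to $\Theta$ after $T_{in}$ samples, which is all the policy-iteration scheme in Theorem \ref{th:q_conv} actually needs. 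Your closing concern about forgetting factors and noise identifies real delicacies of the general theorem, but both are moot in the paper's noiseless setting.
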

\begin{theorem}\label{th:q_conv}
Suppose $\phi_t$ is persistently excited satisfying \eqref{eq:pe2} and consequently satisfies Lemma \ref{lem:rls_conv}, then there $\exists T_{in}<\infty$ such that the adaptive policy iteration mechanism converges to the optimal policy with $k^*_{ij}$

\end{theorem}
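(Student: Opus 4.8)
The plan is to prove convergence through a two-timescale argument that separates the inner recursive least-squares (RLS) policy-evaluation loop from the outer policy-improvement iteration, and then to close the argument with a monotone-convergence argument exploiting the convexity established in Theorem~\ref{th:convx_F}. The overall structure specializes the standard policy-iteration convergence framework of \cite{lewis2012reinforcement} to our least-squares Q-function \eqref{eq:q-est2}.

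First I would fix the policy index $q$ and hold the gain vector $\bold{k}$ constant over the evaluation window $l = 1, \ldots, T_{in}$, so that within this window $\Theta = \bold{k}_{ij}$ is a fixed target estimated by RLS from the regression form \eqref{eq:q-est2}. Since $\phi_t$ is assumed persistently exciting per \eqref{eq:pe2}, Lemma~\ref{lem:rls_conv} applies directly and the RLS estimate $\hat{\Theta}_m$ converges to the true $\Theta$ minimizing \eqref{eq:rls_min}. The existence of a finite $T_{in}$ follows because the PE lower bound $\epsilon_0 I \leq \frac{1}{T_{in}}\sum_{l=1}^{T_{in}} \phi_{t-l}\phi_{t-l}^T$ keeps the information matrix uniformly nonsingular, so the RLS covariance contracts and the estimate attains any prescribed accuracy in finitely many steps.

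Next I would establish the outer-loop policy-improvement property. Because the discount factor satisfies $0 < \gamma < 1$, the policy-evaluation operator implied by \eqref{eq:q-gen2} is a $\gamma$-contraction, so each fixed policy admits a unique Q-function fixed point — exactly the one RLS recovers in the inner loop. The greedy update that minimizes the evaluated Q-function then yields an improved policy, producing a sequence of cost values $F(p,\bold{k})$ that is monotonically non-increasing across the policy-change steps indexed by $q$.

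The final step is to invoke monotone convergence together with Theorem~\ref{th:convx_F}: since $F(p,\bold{k})$ is convex in $\bold{k}$ for fixed $p$, it possesses a unique global minimizer, and the monotonically improving, lower-bounded sequence generated by policy iteration must therefore converge to this unique optimum, giving the gains $k^*_{ij}$. The main obstacle I anticipate is rigorously bridging the two timescales: one must verify that the inner RLS loop has converged \emph{sufficiently} within the finite window $T_{in}$ so that the outer greedy improvement remains a genuine improvement despite residual estimation error, and that this error does not accumulate across successive policy changes to destroy the monotonicity. Carefully bounding the propagated estimation error — or, alternatively, arguing exact recovery in the linear-in-parameters case \eqref{eq:q-est2} so that no residual error survives the window — is the crux of the argument.
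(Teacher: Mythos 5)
Your opening step---persistence of excitation plus Lemma~\ref{lem:rls_conv} so that the inner RLS loop recovers $\Theta$ exactly, with the linear-in-parameters structure of \eqref{eq:q-est2} dissolving the two-timescale difficulty---matches the paper. The genuine gap is your final step. You assert that because $F(p,\mathbf{k})$ is convex in $\mathbf{k}$ (Theorem~\ref{th:convx_F}) it ``possesses a unique global minimizer,'' and you lean on that uniqueness to convert monotone improvement of cost values into convergence of the gains themselves. Convexity does not give uniqueness, and in this problem uniqueness provably fails: the proof of Theorem~\ref{th:convx_F} itself shows the Hessian $H_{\mathbf{k}} = Y^T \Lambda Y$ has spectrum $\{0,\ldots,0,\sum_{i=1}^{|\mathcal{E}|} y_i^2\}$, i.e.\ it is positive semidefinite of rank one, so $F$ is nowhere strictly convex. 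Concretely, in the common-interaction-model scenario analyzed in the paper's own proof, the cost of robot $i$ vanishes on the entire hyperplane $\sum_{j \in \mathcal{N}_i^+} k_{ij} = 1$, so whenever a robot has two or more out-neighbors there is a continuum of global minimizers. Your monotone-convergence argument therefore only yields convergence of the cost $F$ to its infimum and convergence of iterates (at best) to the \emph{set} of minimizers; it cannot identify the specific gains $k^*_{ij}$ that the theorem asserts.

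The paper closes this by computation rather than abstraction, and its use of the discount factor is also different from yours: instead of invoking a $\gamma$-contraction fixed-point argument, it observes that $\gamma^l \to 0$ over the evaluation window, which collapses the Q-regression \eqref{eq:q-est2} into the undiscounted equation \eqref{eq:q-est23}. It then posits a common pairwise model $v = \xi x_{ij}$ with $x_{ij} = q$ for all out-neighbors, writes the per-robot objective as $O_i(\mathbf{k}) = \sum_{j \in \mathcal{N}_i^+} \xi^2 \bigl(1 - \sum_{j \in \mathcal{N}_i^+} k_{ij}\bigr)^2 |\mathcal{N}_i^+| \, \|q\|^2$, and exhibits the explicit gains $k^*_{ij} = 1/|\mathcal{N}_i^+|$ at which the global minimum $O_i = 0$ is attained---the symmetric point of the minimizing hyperplane---which is also what delivers the network-size-independence claim. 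To repair your route you would need either strict convexity (unavailable here) or an explicit selection principle (e.g., that RLS returns the minimum-norm solution of the degenerate least-squares problem) identifying \emph{which} point of the minimizing set the scheme converges to; making that precise essentially amounts to redoing the paper's explicit calculation.
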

\begin{proof}
Given PE (Definition \ref{def:pe}) exists and invoking Lemma \ref{lem:rls_conv}, we know that $\hat{\Theta}\rightarrow \Theta$ which also implies \eqref{eq:rls_min} is minimized.
Thus, a discount factor $\gamma \in [0,1]$ where $\gamma^l\rightarrow 0$ as $l\rightarrow \infty$ implies \eqref{eq:q-est2} looks like
\begin{equation}\label{eq:q-est23}
\begin{aligned}
(-A_{ij}(t)B_{ij}(t) )^T\bold{k}_{ij} = m_{ij}(t)
\end{aligned}    
\end{equation}
Now, for better understanding,
consider an robot $i$ having a control action
\begin{equation}\label{eq:ui1}
\begin{aligned}
\Bar{u}_i = \sum_{j \in \mathcal{N}_i^+} k_{ij} v = -v \left ( \sum_{j \in \mathcal{N}_i^+} k_{ij}  \right )
\end{aligned}    
\end{equation}
Recall, that \eqref{eq:q-est23} can be represented as an
optimization objective $O(k)$ for robot $i$ as
\begin{equation}\label{eq:obj11}
\begin{aligned}
O_i (\bold{k}) = \sum_{j \in \mathcal{N}_i^+}\|P_{ij}(\Bar{u}_i)- m_{ij}\|^2  = \sum_{j \in \mathcal{N}_i^+}\|P_{ij}(\Bar{u}_i)+ v\|^2
\end{aligned}    
\end{equation}
Then assume a common model of $v= \xi x_{ij}$ for $\xi \in \mathbb{R}$, 
the robot $i$s contribution becomes

\begin{equation}\label{eq:obj23}
\begin{aligned}
O_i({\bold{k}}) = \sum_{j\in \mathcal{N}_i^+} \xi^2 \left (1 - \left ( \sum_{j\in \mathcal{N}_i^+} k_{ij}     \right ) \right )^2 |\mathcal{N}^+_i|\|q\|^2
\end{aligned}    
\end{equation}
taking $x_{ij}=q$ for all $j\in \mathcal{N}^+_i$.
Thus, it is apparent that the optimal gains for robot $i$ are
$k^*_{ij} = 1/|\mathcal{N}^+_i|, \forall e_{ij} , j \in \mathcal{N}_i^+$
where the global minimum
$O_i(k) = 0$ is achieved, which completes the proof. 
We can claim that
our adaptive behavior is independent of the network size and obtains optimal $k^*_{ij}$ using \emph{policy iteration} based Q-learning. 
\end{proof}

\section{Simulation Results}\label{sec:simulations}
In this section, we discuss results from
 a MATLAB simulation over $350$ seconds with $6$ robots,with robot $6$ as the leader with exogenous input, in a directed spanning tree configuration for a leader-follower scenario\footnote{Refer to the submitted video for further details}. 
\subsection{Pairwise adaptive gain for MRS}
\def\figsize{0.89}
\def\figsizee{0.95}
\def\figsizeee{0.92}
\def\subfigsize{0.5}
\def\spacereduction{-0.15cm}

\begin{figure}
\centering
\begin{subfigure}[b]{0.47\textwidth}
   \includegraphics[width=1\linewidth]{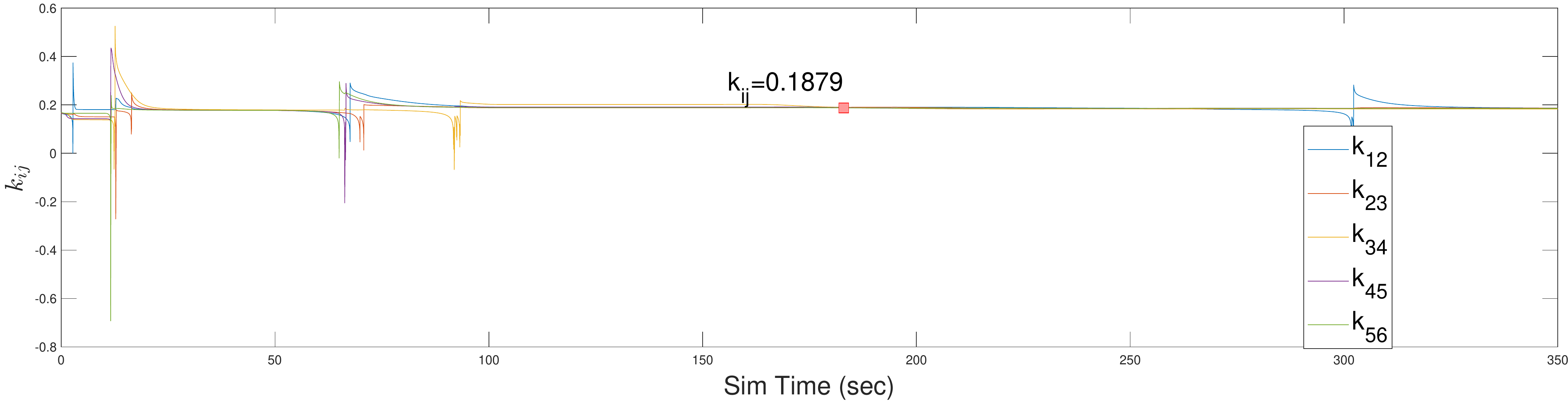}
   \caption{}
   \label{fig:kij} 
\end{subfigure}

\begin{subfigure}[b]{0.47\textwidth}
   \includegraphics[width=1\linewidth]{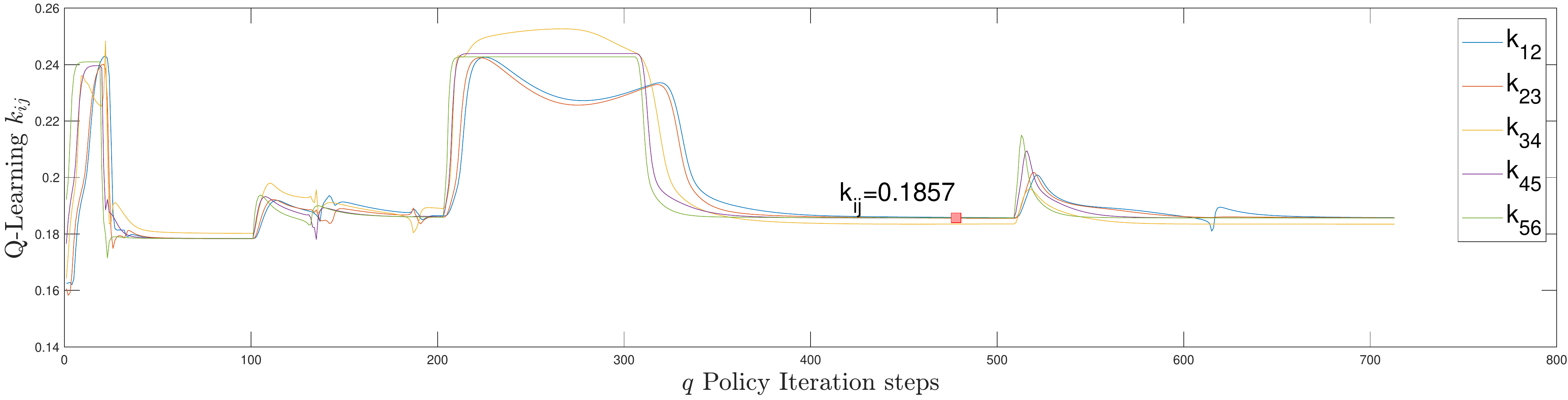}
   \caption{}
   \label{fig:qkij}
\end{subfigure}

\caption[Two numerical solutions]{(a) $k_{ij}$ computed using adaptive law equation \eqref{eq:adp_law}. (b)$k_{ij}$ computed using policy iteration based Q-Learning.}
\end{figure}

\begin{figure}[t!]
    \centering
    \includegraphics[width=1\columnwidth]{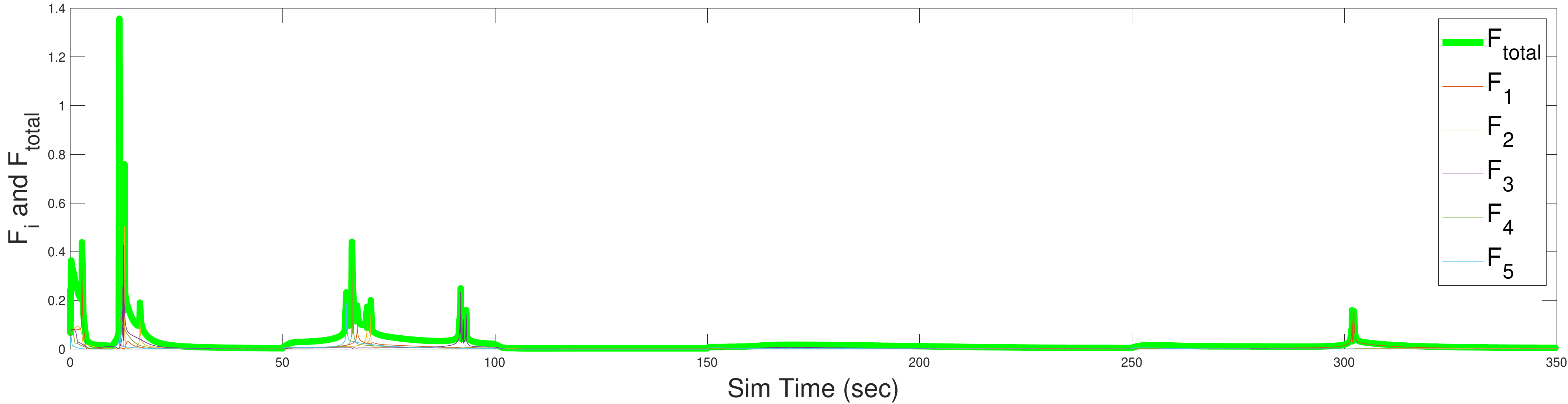}
    \caption{Plot of $F_i(p,k)$ and total $F(p,k)$, the pairwise cost function encoding the deviation of pairwise interactions from its nominal model. }\label{fig:F}
\end{figure}
Figure \ref{fig:kij} shows the computed $k_{ij}$ for each directed edge for over $350s$ of the total simulation time computed using equation \eqref{eq:adp_law}. Corresponding to the the computed $k_{ij}$, Figure \ref{fig:F} shows the plot of the total cost value of the objective function $F$ and per robot cost $F_i$. Clearly, the optimum value is achieved over time as after every peak, which represents a change in the actuation of the leader robot $6$, the $F_i$ value converges to zero.
\subsection{Resiliency from induced sensor and actuator fault on selected robots of MRS}
\def\figsize{0.89}
\def\figsizee{0.95}
\def\figsizeee{0.92}
\def\subfigsize{0.5}
\def\spacereduction{-0.15cm}

\begin{figure}
\centering
\begin{subfigure}[b]{0.47\textwidth}
   \includegraphics[width=1\linewidth]{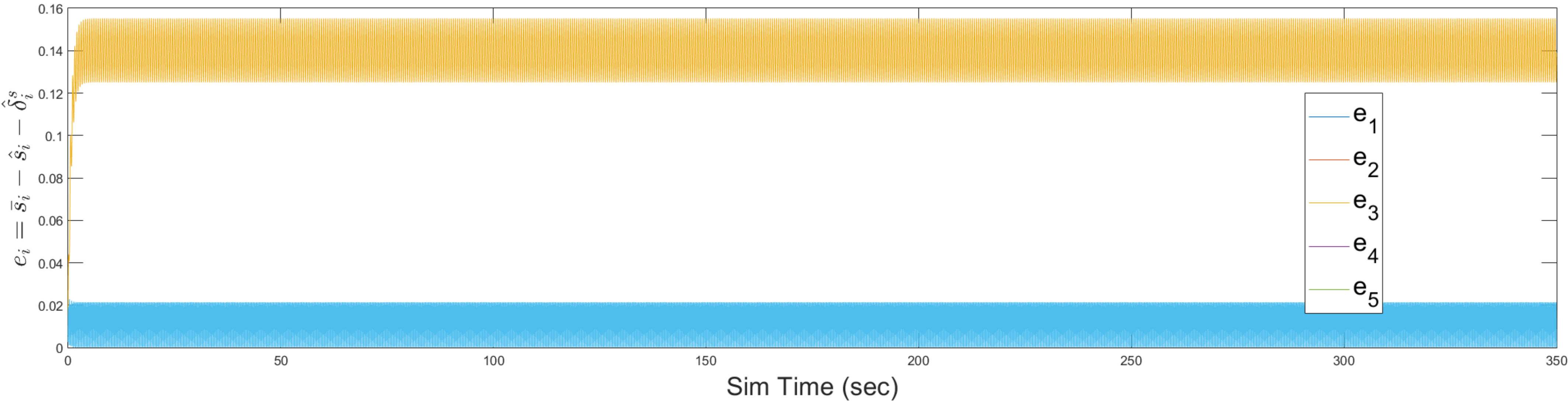}
   \caption{}
   \label{fig:er} 
\end{subfigure}

\begin{subfigure}[b]{0.47\textwidth}
   \includegraphics[width=1\linewidth]{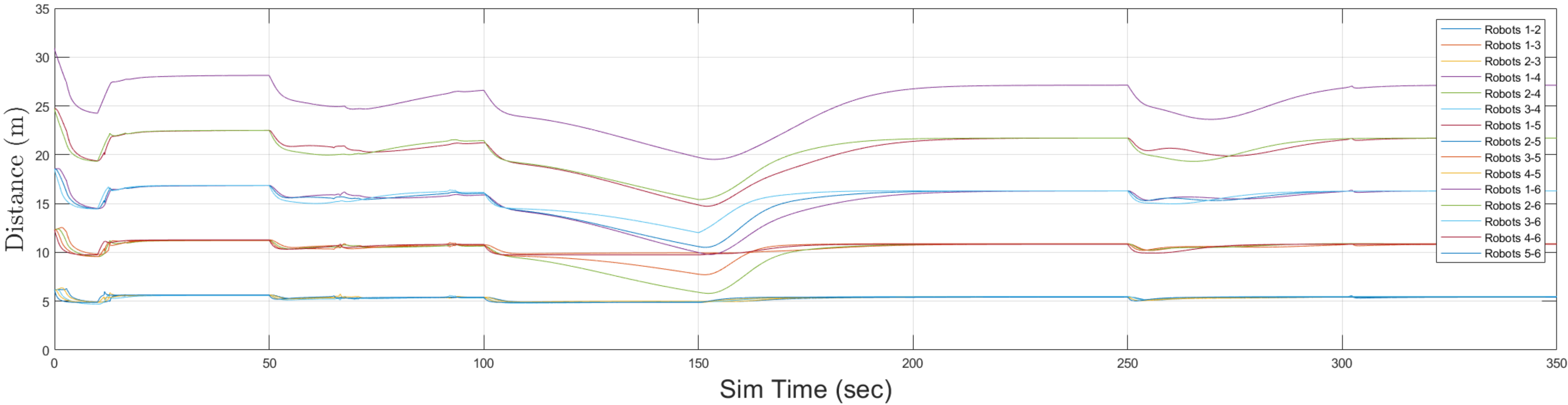}
   \caption{}
   \label{fig:robot_dist}
\end{subfigure}

\caption[Two numerical solutions]{(a) Plot of error dynamics $e_i$ for each robot as given by \eqref{eq:er}. (b) Plot of distance of each robot from the other robots in MRS for complete simulation time of $350s$.}
\end{figure}
For this simulation scenario, robot $3$ is induced with an unbounded sensor fault $0.2t$ 
and all robots with bounded actuation fault $1.5sin(2\pi t)$ where both comply with Assumption \ref{asum:2}. The yellow plot in Figure \ref{fig:er} shows the error dynamics of robot $3$, where clearly the error induced due to the sensors fault in Robot $3$ is attenuated. Similarly the oscillatory, bounded actuator fault is visible in the oscillation of the error dynamics of all the robots. Now, Figure \ref{fig:robot_dist} shows the uniform distances measured between the robots with no indication of the error induced by the unbounded sensor fault or the bounded actuator fault indicating that the global topology control objective was attained despite of induced faults.

\subsection{Q-Learning based pairwise gain determination}
\def\figsize{0.89}
\def\figsizee{0.95}
\def\figsizeee{0.92}
\def\subfigsize{0.5}
\def\spacereduction{-0.15cm}

\begin{figure}
\centering
\begin{subfigure}[b]{0.47\textwidth}
   \includegraphics[width=1\linewidth]{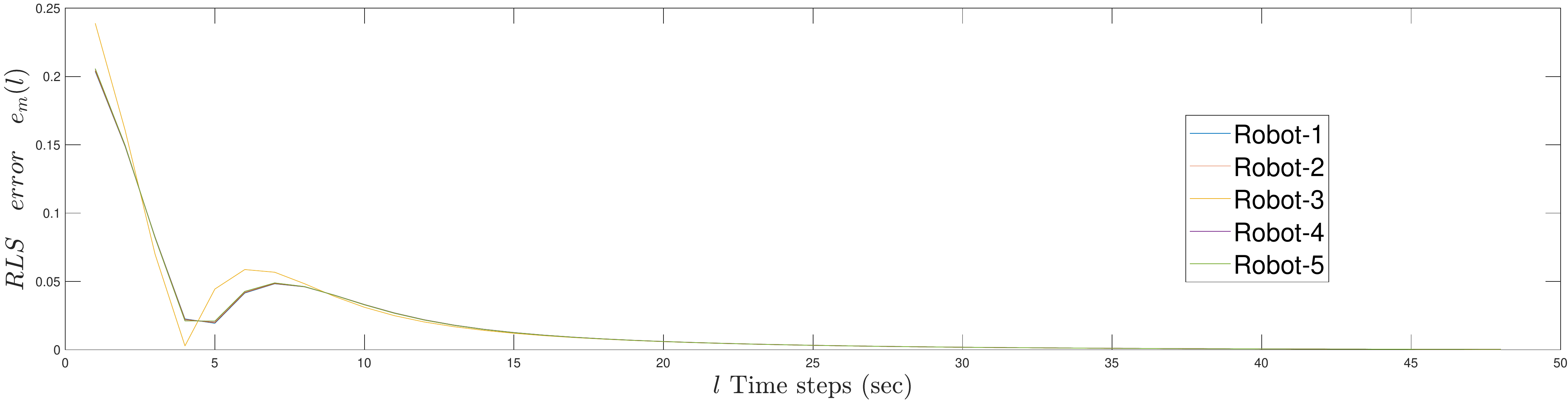}
   \caption{}
   \label{fig:rlser} 
\end{subfigure}

\begin{subfigure}[b]{0.47\textwidth}
   \includegraphics[width=1\linewidth]{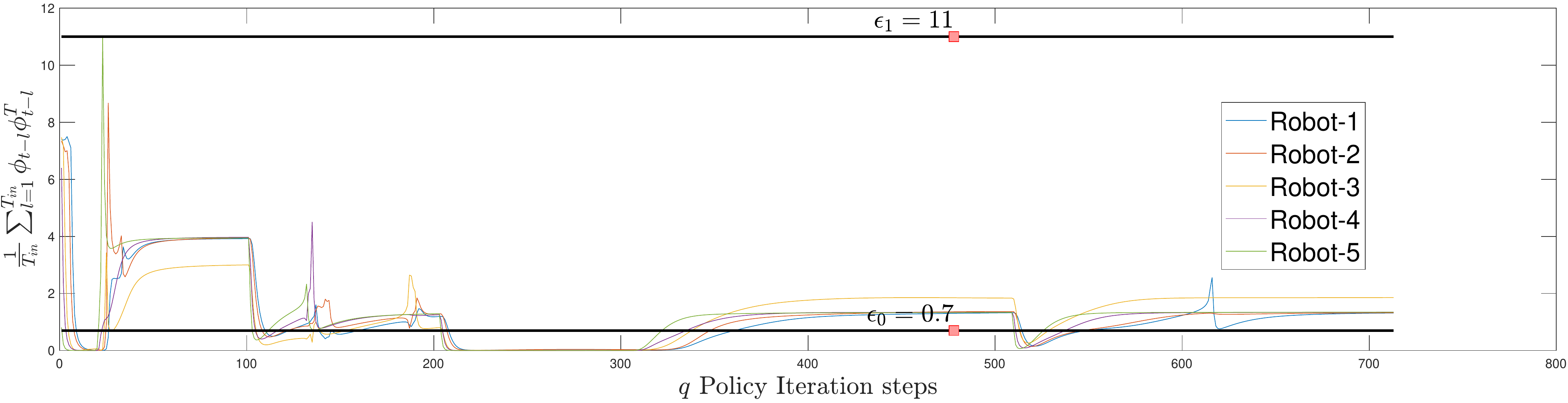}
   \caption{}
   \label{fig:pe}
\end{subfigure}

\caption[Two numerical solutions]{(a) Plot of RLS error for each robot over $l$ time-steps between every policy change. (b) Plot of persistence of excitation (PE) condition for each robot given by relation \eqref{eq:pe2}}
\end{figure}
Note that $k_{ij}$ solved using adaptive laws \eqref{eq:adp_law} and $k_{ij}$ solved using Q-Learning are very similar as denoted by the data points on the plots of Figure \ref{fig:kij} and \ref{fig:qkij}. Further, Figure~\ref{fig:rlser} shows the RLS error convergence over $l=50$ iterations between every policy change and Figure \ref{fig:pe} shows that the PE condition \eqref{eq:pe2} is satisfied over $q$ policy change steps for $\epsilon_0 =0.7$ and $\epsilon_1 =11$. The implication of Figure \ref{fig:pe} is that the PE condition is satisfied when there is an exogenous input provided to the leader robot $6$. 

\section{Conclusion}\label{sec:conclusion}
In this paper, we derived an adaptive control law capable of maintaining the interactions among robots in MRS such that the pairwise ``prescribed'' performance is achieved which is then coupled with $H_{\infty}$ control protocols to make the MRS resilient to additive sensor and actuator faults. Finally, we show a policy iteration based Q-Learning implementation for the discrete-time form of the MRS to solve for optimal gain $k^*_{ij}$. 
Numerical simulations are provided to support the theory.
\bibliographystyle{IEEEtran}
\bibliography{ref}

\end{document}